\documentclass{sig-alternate}

\usepackage{enumitem}
\usepackage{framed}
\usepackage{cprotect}
\usepackage{enumitem}
\usepackage{listings}
\usepackage{amstext}
\usepackage{amstext}
\usepackage{pdfpages}
\usepackage{alltt}
\usepackage{epstopdf}
\usepackage{xspace,colortbl}
\usepackage[USenglish]{babel}
\usepackage{multirow}
\usepackage{url}
\usepackage{subfigure}
\usepackage{graphicx}
\usepackage{amssymb}
\usepackage{fmtcount}
\usepackage{amsfonts}
\usepackage{xspace}
\usepackage{amsmath}
\usepackage{multirow}
\usepackage[mathscr]{eucal}
\usepackage{colortbl}

\usepackage{bm}
\usepackage{charter}
\usepackage{csquotes}
\usepackage{enumitem}

\lstset{basicstyle=\large,breaklines=true,language=SQL,belowcaptionskip=.1\baselineskip}


\makeatletter
\def\@copyrightspace{\relax}
\makeatother

\begin{document}


\newtheorem{theorem}{Theorem}
\newtheorem{example}{Example}
\newtheorem{definition}{Definition}
\newtheorem{problem}{Problem}
\newtheorem{property}{Property}
\newtheorem{proposition}{Proposition}
\newtheorem{lemma}{Lemma}
\newtheorem{corollary}{Corollary}

\newcommand{\cond}{\textrm{pred}\xspace}
\newcommand{\dataset}{data set\xspace}
\newcommand{\datasets}{data sets\xspace}
\newcommand{\spview}{\textsf{SPView}\xspace}
\newcommand{\fjview}{\textsf{FJView}\xspace}
\newcommand{\aggview}{\textsf{AggView}\xspace}
\newcommand{\hashfunc}[1]{\textsf{hash}(#1)\xspace}
\newcommand{\hashop}{\textsf{hash}\xspace}
\newcommand{\nsc}{\textsf{NormalizedSC}\xspace}
\newcommand{\rsc}{\textsf{RawSC}\xspace}

\newcommand{\avgfunc}{\ensuremath{\texttt{avg} }\xspace}
\newcommand{\maxfunc}{\ensuremath{\texttt{max} }\xspace}
\newcommand{\minfunc}{\ensuremath{\texttt{min} }\xspace}
\newcommand{\histfunc}{\ensuremath{\texttt{histogram\_numeric} }\xspace}
\newcommand{\countfunc}{\ensuremath{\texttt{count}}\xspace}
\newcommand{\sumfunc}{\ensuremath{\texttt{sum} }\xspace}
\newcommand{\varfunc}{\ensuremath{\texttt{var} }\xspace}
\newcommand{\stdfunc}{\ensuremath{\texttt{std} }\xspace}
\newcommand{\covfunc}{\ensuremath{\texttt{cov} }\xspace}
\newcommand{\corrfunc}{\ensuremath{\texttt{corr} }\xspace}
\newcommand{\medfunc}{\ensuremath{\texttt{median} }\xspace}
\newcommand{\percfunc}{\ensuremath{\texttt{percentile} }\xspace}
\newcommand{\havingfunc}{\ensuremath{\texttt{HAVING} }\xspace}
\newcommand{\selectfunc}{\ensuremath{\texttt{select} }\xspace}
\newcommand{\ratio}{\ensuremath{\rho }\xspace}

\newcommand{\insertion}{\ensuremath{\texttt{INSERT} }\xspace}
\newcommand{\update}{\ensuremath{\texttt{UPDATE} }\xspace}
\newcommand{\delete}{\ensuremath{\texttt{DELETE} }\xspace}

\newcommand{\sysfull}{ActiveClean\xspace}
\newcommand{\sys}{ActiveClean\xspace}
\newcommand{\sysnospace}{ActiveClean}

\newcommand{\tbl}[1]{\textsf{#1}\xspace}
\newcommand{\field}[1]{\textsf{#1}\xspace}
\newcommand{\cost}{\textrm{cost}\xspace}
\newcommand{\ans}{\textsf{ans}\xspace}
\newcommand{\dans}{\Delta\textsf{ans}\xspace}
\newcommand{\cqp}{correction query processing\xspace}
\newcommand{\Cqp}{Correction query processing\xspace}

\newcommand{\reminder}[1]{{{\textcolor{magenta}{\{\{\bf #1\}\}}}\xspace}}
\newcommand{\specialcell}[2][c]{%
  \begin{tabular}[#1]{@{}c@{}}#2\end{tabular}}

\def\ojoin{\setbox0=\hbox{$\bowtie$}%
  \rule[-.02ex]{.25em}{.4pt}\llap{\rule[\ht0]{.25em}{.4pt}}}
\def\leftouterjoin{\mathbin{\ojoin\mkern-5.8mu\bowtie}}
\def\rightouterjoin{\mathbin{\bowtie\mkern-5.8mu\ojoin}}
\def\fullouterjoin{\mathbin{\ojoin\mkern-5.8mu\bowtie\mkern-5.8mu\ojoin}}

\pagestyle{plain}

\title{ActiveClean: Interactive Data Cleaning While Learning Convex Loss Models}

\numberofauthors{1}
\author{\large Sanjay Krishnan, Jiannan Wang, Eugene Wu{$\,^\dag$}, Michael J. Franklin, Ken Goldberg \\
\vspace{.2em}\affaddr{\large UC Berkeley, ~~ $^\dag$Columbia University} \\
\vspace{.1em}\affaddr{\large \{sanjaykrishnan, jnwang, franklin, goldberg\}@berkeley.edu}\\
\affaddr{\large ewu@cs.columbia.edu}
}


\maketitle

\begin{abstract}
Data cleaning is often an important step to ensure that predictive models, such as regression and classification, are not affected by systematic errors such as inconsistent, out-of-date, or outlier data.
Identifying dirty data is often a manual and iterative process, and can be challenging on large datasets.
However, many data cleaning workflows can introduce subtle biases into the training processes due to violation of independence assumptions.
We propose \sys, a progressive cleaning approach where the model is updated incrementally instead of re-training and can guarantee accuracy on partially cleaned data.
\sys supports a popular class of models called convex loss models (e.g., linear regression and SVMs).
\sys also leverages the structure of a user's model to prioritize cleaning those records likely to affect the results.
We evaluate \sys on five real-world datasets UCI Adult, UCI EEG, MNIST, Dollars For Docs, and WorldBank with both real and synthetic errors.
Our results suggest that our proposed optimizations can improve model accuracy by up-to 2.5x for the same amount of data cleaned.
Furthermore for a fixed cleaning budget and on all real dirty datasets, \sys returns more accurate models than uniform sampling and Active Learning. 
\end{abstract}

\if{0}
\begin{abstract}
Databases are susceptible to various forms of corruption, or \emph{dirtiness}, such as missing, incorrect, or inconsistent values.
Increasingly, modern data analysis pipelines involve Machine Learning for predictive models which can be sensitive to dirty data.
Dirty data is often expensive to repair, and naive sampling solutions are not suited for training high dimensional models.
In this paper, we propose \sysfull, an anytime framework for training Machine Learning models with budgeted data cleaning.
Our framework updates a model iteratively as small samples of data are cleaned, and includes numerous optimizations such as importance weighting and dirty data detection.
We evaluate \sys on 4 real datasets and find that our methodology can return more accurate models for a smaller cost  than alternatives such as uniform sampling and active learning.
\end{abstract}
\fi

\setcounter{page}{1}

\section{Introduction}
Machine Learning on large and growing datasets is a key data management challenge with significant interest in both industry and academia~\cite{bdas, alexandrov2014stratosphere, crotty2014tupleware, tensor}.
Despite a number of breakthroughs in reducing training time, predictive modeling can still be a tedious and time-consuming task for an analyst. 
Data often arrive \emph{dirty}, including missing, incorrect, or inconsistent attributes, and analysts widely report that data cleaning and other forms of pre-processing account for up to 80\% of their effort~\cite{nytimes, kandel2012}.
While data cleaning is an extensively studied problem, the predictive modeling setting poses a number of new challenges: (1) high dimensionality can amplify even a small amount of erroneous records~\cite{xiaofeature}, (2) the complexity can make it difficult to trace the consequnces of an error, and (3) there are often subtle technical cconditions (e.g., independent and identically distributed) that can be violated by data cleaning.
Consequently, techniques that have been designed for traditional SQL analytics may be inefficient or even unreliable.
In this paper, we study the relationship between data cleaning and model training workflows and explore how to apply existing data cleaning approaches with provable guarantees.

One of the main bottlenecks in data cleaning is the human effort in determining which data are dirty and then developing rules or software to correct the problems.
For some types of dirty data, such as inconsistent values, model training may seemingly succeed albeit, with potential subtle inaccuracies in the model.
For example, battery-powered sensors can transmit unreliable measurements when battery levels are low \cite{DBLP:conf/pervasive/JefferyAFHW06}. 
Similarly, data entered by humans can be susceptible to a variety of inconsistencies (e.g., typos), and unintentional cognitive biases~\cite{DBLP:conf/recsys/KrishnanPFG14}.
Such problems are often addressed in time-consuming loop where the analys trains a model, inspects the model and its predictions, clean some data, and re-train.

This iterative process is the de facto standard, but without appropriate care, can lead to several serious statistical issues.
Due to the well-known Simpson's paradox, models trained on a mix of dirty and clean data can have very misleading results even in simple scenarios (Figure \ref{update-arch1}).
Furthermore, if the candidate dirty records are not identified with a known sampling distribution, the statistical independence assumptions for most training methods are violated. 
The violations of these assumptions can introduce confounding biases.
To this end, we designed \sys which trains predictive models while allowing for iterative data cleaning and has accuracy guarantees.
\sys automates the dirty data identification process and the model update process, thereby abstracting these two error-prone steps away from the analyst.

\sys is inspired by the recent success of progressive data cleaning where a user can gradually clean more data until the desired accuracy is reached~\cite{altowim2014progressive, whang2014incremental, papenbrock2015progressive, gruenheid2014incremental, mayfield2010eracer, DBLP:journals/pvldb/YakoutENOI11, yakout2013don}.
We focus on a popular class of models called convex loss models (e.g., includes linear regression and SVMs) and show that the Simpson's paradox problem can be avoided using iterative maintenance of a model rather than re-training.
This process leverages the convex structure of the model rather than treating it like a black-box, and we apply convergence arguments from convex optimization theory.
We propose several novel optimizations that leverage information from the model to guide data cleaning towards the records most likely to be dirty and most likely to affect the results.

\noindent To summarize the contributions:
\begin{itemize}[noitemsep]
\item \textbf{Correctness} (Section \ref{model-update}). We show how to update a dirty model given newly cleaned data. This update converges monotonically in expectation. For a batch size $b$ and iterations $T$, it converges with rate $O(\frac{1}{\sqrt{bT}})$. 
\item \textbf{Efficiency} (Section \ref{dist-samp}). We derive a theoretical optimal sampling distribution that minimizes the update error and an approximation to estimate the theoretical optimum.
\item \textbf{Detection and Estimation} (Section \ref{opti}). We show how \sys can be integrated with data detection to guide data cleaning towards records expected to be dirty.
\item The experiments evaluate these components on four datasets with real and synthetic corruption (Section \ref{eval}). Results suggests that for a fixed cleaning budget, \sys returns more accurate models than uniform sampling and Active Learning when systematic corruption is sparse.

\end{itemize}

\section{Background and Problem Setup}\label{background}
This section formalizes the iterative data cleaning and training process and highlights an example application.

\subsection{Predictive Modeling}
The user provides a relation $R$ and wishes to train a model using the data in $R$.
This work focuses on a class of well-analyzed predictive analytics problems; ones that can be expressed as the minimization of convex loss functions.
Convex loss minimization problems are amenable to a variety of incremental optimization methodologies with provable guarantees (see Friedman, Hastie, and Tibshirani \cite{friedman2001elements} for an introduction).
Examples include generalized linear models (including linear and logistic regression), support vector machines, and in fact, means and medians are also special cases. 

We assume that the user provides a featurizer $F(\cdot)$ that maps every record $r \in R$ to a feature vector $x$ and label $y$.
For labeled training examples $\{(x_{i},y_{i})\}_{i=1}^{N}$, the problem is to find a vector of \emph{model parameters} $\theta$ by minimizing a loss function $\phi$ over all training examples:
\[
 \theta^{*}=\arg\min_{\theta}\sum_{i=1}^{N}\phi(x_{i},y_{i},\theta)
\]
Where $\phi$ is a convex function in $\theta$.
For example, in a linear regression $\phi$ is:
\[
\phi(x_{i},y_{i},\theta) = \|\theta^Tx_{i} - y_i \|_2^2
\]
Typically, a \emph{regularization} term $r(\theta)$ is added to this problem.
$r(\theta)$ penalizes high or low values of feature weights in $\theta$ to avoid overfitting to noise in the training examples.
\begin{equation}
 \theta^{*}=\arg\min_{\theta}\sum_{i=1}^{N}\phi(x_{i},y_{i},\theta) + r(\theta)
 \label{ideal}
\end{equation}
In this work, without loss of generality, we will include the regularization as part of the loss function i.e., $\phi(x_{i},y_{i},\theta)$ includes $r(\theta)$.

\subsection{Data Cleaning}
We consider corruption that affects the attribute values of records. This does \emph{not} cover errors that simultaneously affect multiple records such as record duplication or structure such as schema transformation.
Examples of supported cleaning operations include, batch resolving common inconsistencies (e.g., merging ``U.S.A" and ``United States"), filtering outliers (e.g., removing records with values $>1e6$), and standardizing attribute semantics (e.g., ``1.2 miles" and ``1.93 km").

We are particularly interested in those errors that are difficult or time-consuming to clean, and require the analyst to examine an erroneous record, and determine the appropriate action--possibly leveraging knowledge of the current best model.
We represent this operation as $Clean(\cdot)$ which can be applied to a record $r$ (or a set of records) to recover the clean record $r' = Clean(r)$.
Formally, we treat the $Clean(\cdot)$ as an expensive user-defined function composed of deterministic schema-preserving \textsf{map} and \textsf{filter} operations applied to a subset of rows in the relation.
A relation is defined as \emph{clean} if $R_{clean} = Clean(R_{clean})$.
Therefore, for every $r \in R_{clean}$ there exists a unique $r' \in R$ in the dirty data.
The \textsf{map} and \textsf{filter} cleaning model is not a fundamental restriction of \sys, and Appendix~\ref{set-of-r} discusses a compatible ``set of records" cleaning model.

\subsection{Iteration}
As an example of how $Clean(\cdot)$ fits into an iterative analysis process, consider an analyst training a regression and identifying outliers. 
When she examines one of the outliers, she realizes that the base data (prior to featurization) has a formatting inconsistency that leads to incorrect parsing of the numerical values.
She applies a batch fix (i.e., $Clean(\cdot)$) to all of the outliers with the same error, and re-trains the model. 
This iterative process can be described as the following pseudocode loop:
\begin{enumerate}[leftmargin=1em]\scriptsize\sloppy
  \item \texttt{Init(iter)}
  \item \texttt{current\_model = Train(R)}
  \item For each t in $\{1,...,iter\}$
  \begin{enumerate}
    \item \texttt{dirty\_sample $=$ Identify(R,current\_model)}
    \item \texttt{clean\_sample $=$ Clean(dirty\_sample)}
    \item \texttt{current\_model $=$ Update(clean\_sample, R)}
  \end{enumerate}
  \item \texttt{Output: current\_model}
  \end{enumerate}

\vspace{0.5em}

While we have already discussed $Train(\cdot)$ and $Clean(\cdot)$, the analyst still has to define the primitives $Identify(\cdot)$ and $Update(\cdot)$.
For $Identify(\cdot)$, given a the current best model, the analyst must specify some criteria to select a set of records to examine.
And in $Update(\cdot)$, the analyst must decide how to update the model given newly cleaned data.
It turns out that these primitives are not trivial to implement since the straight-forward solutions can actually lead to divergence of the trained models.

\subsection{Challenges}\label{correctness} 
\vspace{0.5em} 
\textbf{Correctness: } Let us assume that the analyst has implemented an $Identify(\cdot)$ function that returns $k$ candidate dirty records.
The straight-forward application data cleaning is to repair the corruption in place, and re-train the model after each repair.
Suppose $k \ll N$ records are cleaned, but all of the remaining dirty records are retained in the dataset.
Figure \ref{update-arch1} highlights the dangers of this approach on a very simple dirty dataset and a linear regression model i.e., the best fit line for two variables. 
One of the variables is systematically corrupted with a translation in the x-axis (Figure \ref{update-arch1}a).
The dirty data is marked in red and the clean data in blue, and they are shown with their respective best fit lines.
After cleaning only two of the data points (Figure \ref{update-arch1}b), the resulting best fit line is in the opposite direction of the true model.

Aggregates over mixtures of different populations of data can result in spurious relationships due to the well-known phenomenon called Simpson's paradox \cite{simpson1951interpretation}.
Simpson's paradox is by no means a corner case, and it has affected the validity of a number of high-profile studies~\cite{simpsonsparadox}; even in the simple case of taking an average over a dataset.
Predictive models are high-dimensional generalizations of these aggregates without closed form techniques to compensate for these biases.
Thus, training models on a mixture of dirty and clean data can lead to unreliable results, where artificial trends introduced by the mixture can be confused for the effects of data cleaning.

\begin{figure}[ht!]
\centering
 \includegraphics[width=\columnwidth]{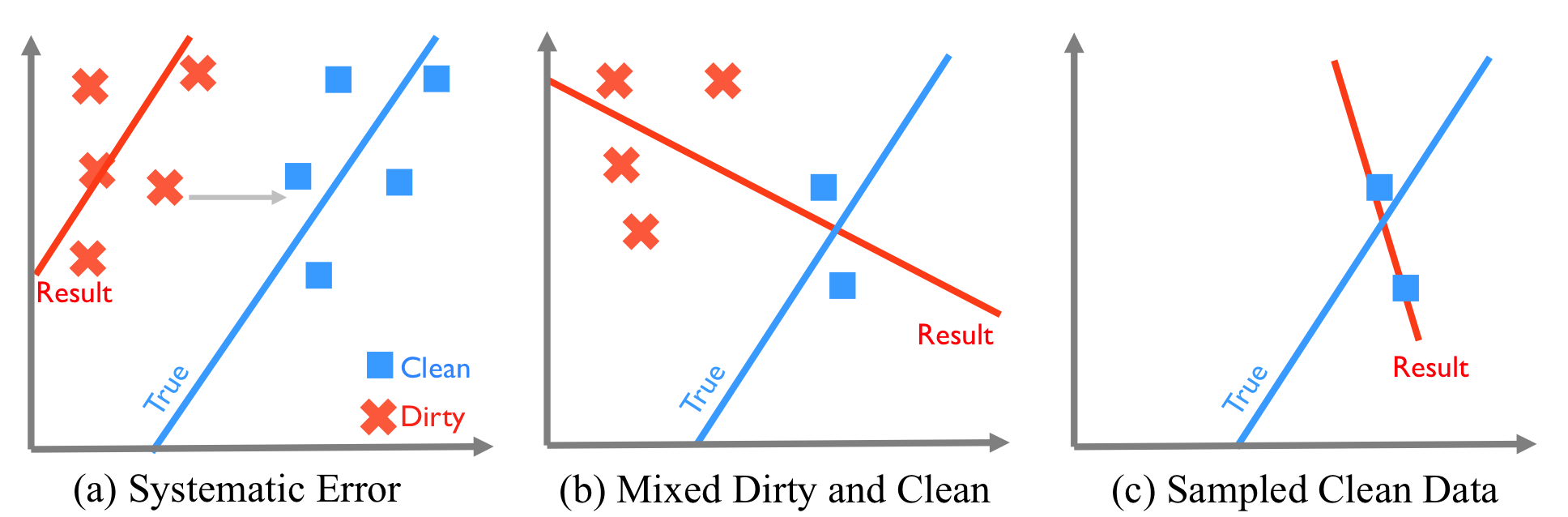}
 \caption{(a) Systematic corruption in one variable can lead to a shifted model. 
 (b) Mixed dirty and clean data results in a less accurate model than no cleaning.
(c) Small samples of only clean data can result in similarly inaccurate models. \label{update-arch1}}
\end{figure}

An alternative is to avoid the dirty data altogether instead of mixing the two populations, and the model re-training is restricted to only data that are known to be clean.
This approach is similar to SampleClean \cite{wang1999sample}, which was proposed to approximate the results of aggregate queries by applying them to a clean sample of data.
However, high-dimensional models are highly sensitive to sample size.
Figure \ref{update-arch1}c illustrates that, even in two dimensions, models trained from small samples can be as incorrect as the mixing solution described before.

\vspace{0.5em} 

\textbf{Efficiency: } Conversely, hypothetically assume that the analyst has implemented a correct $Update(\cdot)$ primitive and implements $Identify(\cdot)$ with a technique such as Active Learning to select records to clean~\cite{yakout2013don,DBLP:journals/pvldb/YakoutENOI11,gokhale2014corleone}.
Active learning is a technique to carefully select the set of examples to learn the most accurate model.
However, these selection criteria are designed for stationary data distributions, an assumption which is not true in this setting.
As more data are cleaned, the data distribution changes.
Data which may look unimportant in the dirty data might be very valuable to clean in reality, and thus any prioritization has to predict a record's value with respect to an anticipated clean model.

\subsection{The Need For Automation}\label{alrw}
\sys is a framework that implements the $Identify(\cdot)$ and $Update(\cdot)$ primitives for the analyst. 
By automating the iterative process, \sys ensures reliable models with convergence guarantees.
The analyst first initializes \sys with a dirty model.
\sys carefuly selects small batches of data to clean based on data that are likely to be dirty and likely to affect the model.
The analyst applies data cleaning to these batches, and \sys updates the model with an incremental optimization technique.

Machine learning has been applied in prior work to improve the efficiency of data cleaning~\cite{yakout2013don,DBLP:journals/pvldb/YakoutENOI11,gokhale2014corleone}.
Human input, either for cleaning or validation of automated cleaning, is often expensive and impractical for large datasets.
A model can learn rules from a small set of examples cleaned (or validated) by a human, and active learning is a technique to carefully select the set of examples to learn the most accurate model.
This model can be used to extrapolate repairs to not-yet-cleaned data, and the goal of these approaches is to provide the cleanest possible dataset--independent of the subsequent analytics or query processing.
These approaches, while very effective, suffer from composibility problems when placed inside cleaning and training loops.
To summarize, \sys considers data cleaning \emph{during} model training, while these techniques consider model training \emph{for} data cleaning.
One of the primary contributions of this work is an incremental model update algorithm with correctness guarantees for mixtures of data.

\subsection{Use Case: Dollars for Docs \cite{dollarsfordocs}}\label{s:usecase}
ProPublica collected a dataset of corporate donations to doctors to analyze conflicts of interest. 
They reported that some doctors received over \$500,000 in travel, meals, and consultation expenses \cite{dollarsfordocsa}.
ProPublica laboriously curated and cleaned a dataset from the Centers for Medicare and Medicaid Services that listed nearly 250,000 research donations, and aggregated these donations by physician, drug, and pharmaceutical company.
We collected the raw unaggregated data and explored whether suspect donations could be predicted with a model.
This problem is typical of analysis scenarios based on observational data seen in finance, insurance, medicine, and investigative journalism.
The dataset has the following schema:
\begin{lstlisting}[mathescape,basicstyle={\scriptsize}]
Contribution(pi_specialty$\textrm{,}$ drug_name$\textrm{,}$ device_name$\textrm{,}$
corporation$\textrm{,}$ amount$\textrm{,}$ dispute$\textrm{,}$ status)
\end{lstlisting}

\noindent\texttt{pi\_specialty} is a textual attribute describing the specialty of the doctor receiving the donation.

\noindent\texttt{drug\_name} is the branded name of the drug in the research study (null if not a drug).

\noindent\texttt{device\_name} is the branded name of the device in the study (null if not a device).

\noindent\texttt{corporation} is the name of the pharmaceutical providing the donation.

\noindent\texttt{amount} is a numerical attribute representing the donation amount.

\noindent\texttt{dispute} is a Boolean attribute describing whether the research was disputed.

\noindent\texttt{status} is a string label describing whether the  donation was allowed under the declared research protocol. The goal is to predict disallowed  donation. 

\vspace{0.5em}

However, this dataset is very dirty, and the systematic nature of the data corruption can result in an inaccurate model.
On the ProPublica website \cite{dollarsfordocs}, they list numerous types of data problems that had to be cleaned before publishing the data (see Appendix \ref{dfd-errors}).
For example, the most significant donations were made by large companies whose names were also more often inconsistently represented in the data, e.g., ``Pfizer Inc.", ``Pfizer Incorporated", ``Pfizer".
In such scenarios, the effect of systematic error can be serious.
Duplicate representations could artificially reduce the correlation between these entities and suspected contributions.
There were nearly 40,000 of the 250,000 records that had either naming inconsistencies or other inconsistencies in labeling the allowed or disallowed \texttt{status}.
Without data cleaning, the detection rate using a Support Vector Machine was 66\%.
Applying the data cleaning to the entire dataset improved this rate to 97\% in the clean data (Section \ref{dfd-exp}), and the experiments describe how \sys can achieve an 80\% detection rate for less than 1.6\% of the records cleaned.

\section{Problem Formalization}\label{statements}
\noindent This section formalizes the problems addressed in the paper.

\subsection{Notation and Setup}\label{notation}
The user provides a relation $R$, a cleaner $C(\cdot)$, a featurizer $F(\cdot)$, and a convex loss problem defined by the loss $\phi(\cdot)$.
A total of $k$ records will be cleaned in batches of size $b$, so there will be $\frac{k}{b}$ iterations. 
We use the following notation to represent relevant intermediate states:
\begin{itemize}[noitemsep]
\item \textbf{Dirty Model: } $\theta^{(d)}$ is the model trained on $R$ (without cleaning) with the featurizer $F(\cdot)$ and loss $\phi(\cdot)$. This serves as an initialization to \sys.
\item \textbf{Dirty Records: } $R_{dirty} \subseteq R$ is the subset of records that are still dirty. As more data are cleaned $R_{dirty} \rightarrow \{\}$.
\item \textbf{Clean Records: } $R_{clean} \subseteq R$ is the subset of records that are clean, i.e., the complement of $R_{dirty}$.
\item \textbf{Samples: } $S$ is a sample (possibly non-uniform but with known probabilities) of the records $R_{dirty}$. The clean sample is denoted by $S_{clean} = C(S)$.
\item \textbf{Clean Model: } $\theta^{(c)}$ is the optimal clean model, i.e., the model trained on a fully cleaned relation.
\item \textbf{Current Model: } $\theta^{(t)}$ is the current best model at iteration $t \in \{1,...,\frac{k}{b}\}$, and $\theta^{(0)} = \theta^{(d)}$. 
\end{itemize}

There are two metrics that we will use to measure the performance of \sys:

\vspace{0.25em}

\noindent\textbf{Model Error. } The model error is defined as $\|\theta^{(t)} - \theta^{(c)}\|$.

\vspace{0.25em}

\noindent\textbf{Testing Error. } Let $T(\theta^{(t)})$ be the out-of-sample testing error when the current best model is applied to the clean data, and $T(\theta^{(c)})$ be the test error when the clean model is applied to the clean data. The testing error is defined as $T(\theta^{(t)}) - T(\theta^{(c)})$

\subsection{Problem 1. Correct Update Problem}\label{updp}
Given newly cleaned data $S_{clean}$ and the current best model $\theta^{(t)}$, the model update problem is to calculate $\theta^{(t+1)}$. 
$\theta^{(t+1)}$ will have some error with respect to the true model $\theta^{(c)}$, which we denote as:
\[
error(\theta^{(t+1)}) = \| \theta^{(t+1)} - \theta^{(c)} \|
\]
Since a sample of data are cleaned, it is only meaningful to talk about expected errors.
We call the update algorithm ``reliable" if the expected error is upper bounded by a monotonically decreasing function $\mu$ of the amount of cleaned data:
\[
\mathbb{E}(error(\theta^{new})) = O(\mu(\mid S_{clean} \mid))
\]
Intuitively, ``reliable" means that more cleaning should imply more accuracy.

\vspace{0.5em}

\emph{The Correct Update Problem is to reliably update the model $\theta^{(t)}$ with a sample of cleaned data.}

\subsection{Problem 2. Efficiency Problem}\label{optp}
The efficiency problem is to select $S_{clean}$ such that the expected error $\mathbb{E}(error(\theta^{(t)}))$ is minimized.
\sys uses previously cleaned data to estimate the value of data cleaning on new records.
Then it draws a sample of records $S \subseteq R_{dirty}$. 
This is a non-uniform sample where each record $r$ has a sampling probability $p(r)$ based on the estimates.
We derive the optimal sampling distribution for the SGD updates, and show how the theoretical optimum can be approximated.

\vspace{0.5em}

\emph{The Efficiency Problem is to select a sampling distribution $p(\cdot)$ over all records such that the expected error w.r.t to the model if trained on fully clean data is minimized.}

\section{Architecture}\label{arch}
\noindent This section presents the \sys architecture.

\subsection{Overview}\label{sysover}
Figure \ref{sys-arch} illustrates the \sys architecture.
The dotted boxes describe optional components that the user can provide to improve the efficiency of the system.  

\subsubsection{Required User Input}\label{uinp}

\noindent\textbf{Model:} The user provides a predictive model (e.g., SVM) specified as a convex loss optimization problem $\phi(\cdot)$ and a featurizer $F(\cdot)$ that maps a record to its feature vector $x$ and label $y$.

\vspace{0.25em}

\noindent\textbf{Cleaning Function: } The user provides a function $C(\cdot)$ (implemented via software or crowdsourcing) that maps dirty records to clean records as per our definition in Section \ref{dmodel}. 

\vspace{0.25em}

\noindent\textbf{Batches: } Data are cleaned in batches of size $b$ and the user can change these settings if she desires more or less frequent model updates.
The choice of $b$ does affect the convergence rate.
Section~\ref{model-update} discusses the efficiency and convergence trade-offs of different values of $b$.
We empirically find that a batch size of $50$ performs well across different datasets and use that as a default.
A cleaning budget $k$ can be used as a stopping criterion once $C(\dot)$ has been called $k$ times, and so the number of iterations of \sys is $T = \frac{k}{b}$.
Alternatively, the user can clean data until the model is of sufficient accuracy to make a decision.

\subsubsection{Basic Data Flow} \label{df}
The system first trains the model $\phi(\cdot)$ on the dirty dataset to find an initial model $\theta^{(d)}$ that the system will subsequently improve.
The {\it sampler} selects a sample of size $b$ records from the dataset and passes
the sample to the {\it cleaner}, which executes $C(\cdot)$ for each sample record and outputs their cleaned versions.
The \emph{updater} uses the cleaned sample to update the weights of the model, thus moving the model closer to the true cleaned model (in expectation).
Finally, the system either terminates due to a stopping condition (e.g., $C(\cdot)$ has been called a maximum number of times $k$, or training error convergence),
or passes control to the {\it sampler} for the next iteration.

\subsubsection{Optimizations}
In many cases, such as missing values, errors can be efficiently detected.
A user provided {\it Detector} can be used to identify such records that are more likely to be dirty, and thus improves the likelihood that the next sample will contain true dirty records.
Furthermore, the {\it Estimator} uses previously cleaned data to estimate the effect that cleaning a given record will have on the model.
These components can be used separately (if only one is supplied) or together to focus the system's cleaning efforts on records that will most improve the model.
Section \ref{opti} describes several instantiations of these components for different data cleaning problems.
Our experiments show that these optimizations can improve model accuracy by up-to 2.5x (Section \ref{comp}).

\subsection{Example}
The following example illustrates how a user would apply \sys to address the use case in Section~\ref{s:usecase}:
\begin{example}\label{archex}
The analyst chooses to use an SVM model, and manually cleans records by hand (the $C(\cdot)$).  
\sys initially selects a sample of $50$ records (the default)  to show the analyst.
She identifies a subset of 15 records that are dirty, fixes them by normalizing the drug and corporation names with the help of a search engine, and corrects the labels with typographical or incorrect values.
The system then uses the cleaned records to update the the current best model and select the next sample of $50$.
The analyst can stop at any time and use the improved model to predict donation likelihoods.
\end{example}

\begin{figure}[t]
\centering
 \includegraphics[width=\columnwidth]{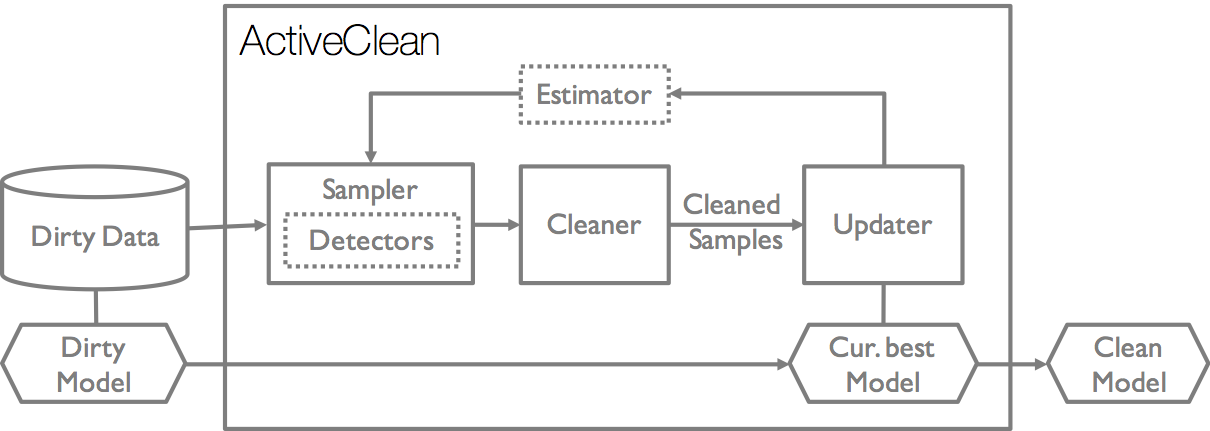}
 \caption{\sysfull allows users to train predictive models while progressively cleaning data. The framework adaptively selects the best data to clean and can optionally (denoted with dotted lines) integrate with pre-defined detection rules and estimation algorithms for improved conference. \label{sys-arch}}
\end{figure}

\section{Updates With Correctness}\label{model-update}
This section describes an algorithm for reliable model updates.
The updater assumes that it is given a sample of data $S_{dirty}$ from $R_{dirty}$ where $i \in S_{dirty}$ has a known sampling probability $p(i)$.
Sections \ref{dist-samp} and \ref{opti} show how to optimize $p(\cdot)$ and the analysis in this section applies for any sampling distribution $p(\cdot) > 0$.

\subsection{Geometric Derivation}\label{geod}
The update algorithm intuitively follows from the convex geometry of the problem.
Consider the problem in one dimension (i.e., the parameter $\theta$ is a scalar value), so then the goal is to find the minimum point ($\theta$) of a curve $l(\theta)$.
The consequence of dirty data is that the wrong loss function is optimized.
Figure \ref{update-arch2}A illustrates the consequence of the optimization.
The red dotted line shows the loss function on the dirty data.
Optimizing the loss function finds $\theta^{(d)}$ that at the minimum point (red star).
However, the true loss function (w.r.t to the clean data) is in blue, thus
the optimal value on the dirty data is in fact a suboptimal point on clean curve (red circle).

\begin{figure}[ht!]
\centering
 \includegraphics[width=\columnwidth]{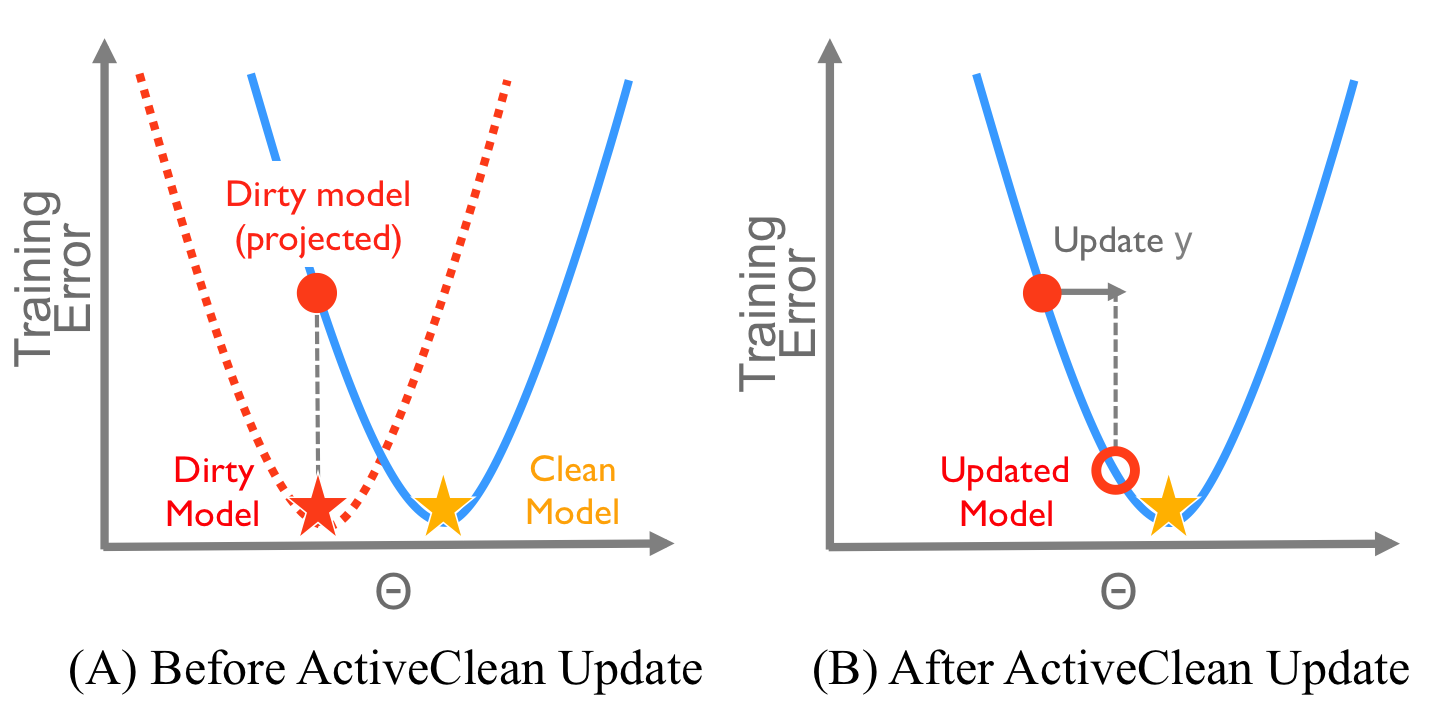}\vspace{-1em}
 \caption{(A) A model trained on dirty data can be thought of as a sub-optimal point w.r.t to the clean data. (B) The gradient gives us the direction to move the suboptimal model to approach the true optimum. \label{update-arch2}}\vspace{-1em}
\end{figure}

The optimal clean model $\theta^{(c)}$ is visualized as a yellow star.
The first question is which direction to update $\theta^{(d)}$ (i.e., left or right).
For this class of models, given a suboptimal point, the direction to 
the global optimum is the gradient of the loss function.
The gradient is a $d$-dimensional vector function of the current model $\theta^{(d)}$ and the clean data.
Therefore, \sys needs to update $\theta^{(d)}$ some distance $\gamma$ (Figure \ref{update-arch2}B):
\[
\theta^{new} \leftarrow \theta^{(d)} - \gamma \cdot \nabla\phi(\theta^{(d)})
\]
At the optimal point, the magnitude of the gradient will be zero.
So intuitively, this approach iteratively moves the model downhill (transparent red circle) -- correcting the dirty model until the desired accuracy is reached.
However, the gradient depends on all of the clean data which is not available and \sys will have to approximate the gradient from a sample of newly cleaned data.
The main intuition is that if the gradient steps are on average correct, the model still moves downhill albeit with a reduced convergence rate proportional to the inaccuracy of the sample-based estimate.

To derive a sample-based update rule, the most important property is that sums commute with derivatives and gradients.
The convex loss class of models are sums of losses, so given the current best model $\theta$, the true gradient $g^*(\theta)$ is:
\[
g^*(\theta) = \nabla\phi(\theta) = \frac{1}{N} \sum_i^N \nabla\phi(x_i^{(c)},y_i^{(c)},\theta)
\]
\sys needs to estimate $g^*(\theta)$ from a sample $S$, which is drawn from the dirty data $R_{dirty}$.
 Therefore, the sum has two components the gradient on the already clean data $g_C$ which can be computed without cleaning and $g_S$ the gradient estimate from a sample of dirty data to be cleaned:
\[
g(\theta) = \frac{\mid R_{clean} \mid}{\mid R \mid} \cdot g_C(\theta) + \frac{\mid R_{dirty} \mid}{\mid R \mid} \cdot g_S(\theta)
\]
$g_C$ can be calculated by applying the gradient to all of the already cleaned records:
\[
g_C(\theta) = \frac{1}{\mid R_{clean}\mid}\sum_{i \in R_{clean}}\nabla\phi(x_i^{(c)},y_i^{(c)},\theta)
\]
$g_S$ can be estimated from a sample by taking the gradient w.r.t each record, and re-weighting the average by their respective sampling probabilities.
Before taking the gradient the cleaning function $C(\cdot)$ is applied to each sampled record.
Therefore, let $S$ be a sample of data, where each $i \in S$ is drawn with probability $p(i)$:
\[
g_{S}(\theta) = \frac{1}{\mid S \mid} \sum_{i \in S}\frac{1}{p(i)}\nabla\phi(x_i^{(c)},y_i^{(c)},\theta)
\]
Then, at each iteration $t$, the update becomes:
\[
\theta^{(t+1)} \leftarrow \theta^{(t)} - \gamma \cdot g(\theta^{(t)})
\]

\subsection{Model Update Algorithm}
To summarize, the algorithm is initialized with $\theta^{(0)} = \theta^{(d)}$ which is the dirty model.
There are three user set parameters the budget $k$, batch size $b$, and the step size $\gamma$.
In the following section, we will provide references from the convex optimization literature that allow the user to appropriately select these values.
At each iteration $t=\{1,...,T\}$, the cleaning is applied to a batch of data $b$ selected from the set of candidate dirty records $R_{dirty}$.
Then, an average gradient is estimated from the cleaned batch and the model is updated.
Iterations continue until $k = T \cdot b$ records are cleaned.

\begin{enumerate}[noitemsep]
	\item Calculate the gradient over the sample of newly clean data and call the result $g_S(\theta^{(t)})$
	\item Calculate the average gradient over all of the already clean data in $R_{clean}=R-R_{dirty}$, and call the result $g_C(\theta^{(t)})$
	\item Apply the following update rule:
	\[
	\theta^{(t+1)} \leftarrow \theta^{(t)} - \gamma \cdot(\frac{\mid R_{dirty} \mid}{\mid R \mid} \cdot g_S(\theta^{(t)}) + \frac{\mid R_{clean} \mid}{\mid R \mid} \cdot  g_C(\theta^{(t)}))
	\]
\end{enumerate} 

\subsection{Analysis with Stochastic Gradient Descent}\label{sgd}
The update algorithm can be formalized as a class of very well studied algorithms called Stochastic Gradient Descent.
SGD provides a theoretical framework to understand and analyze the update rule and bound the error.
Mini-batch stochastic gradient descent (SGD) is an algorithm for finding the optimal value
given the convex loss and data.
In mini-batch SGD, random subsets of data are selected at each iteration and the average gradient is computed for every batch.

One key difference with traditional SGD models is that \sys applies a \emph{full} gradient step on the already clean data and averages it with a stochastic gradient step (i.e., calculated from a sample) on the dirty data. 
Therefore, \sys iterations can take multiple passes over the clean data but at most a single cleaning pass of the dirty data.
The update algorithm can be thought of as a variant of SGD that lazily materializes the clean value.
As data is sampled at each iteration, data is cleaned when needed by the optimization.
It is well known that even for an arbitrary initialization SGD makes significant progress in less than one epoch (a pass through the entire dataset) \cite{bottou2012stochastic}.
In practice, the dirty model can be much more accurate than an arbitrary initialization as corruption may only affect a few features and combined with the full gradient step on the clean data the updates converge very quickly.

\vspace{0.25em}

\noindent\textbf{ Setting the step size $\gamma$: } There is extensive literature in machine learning for choosing the step size $\gamma$ appropriately. $\gamma$ can be set either to be a constant or decayed over time. Many machine learning frameworks (e.g., MLLib, Sci-kit Learn, Vowpal Wabbit) automatically set learning rates or provide different learning scheduling frameworks. 
In the experiments, we use a technique called inverse scaling where there is a parameter $\gamma_0=0.1$, and at each iteration it decays to $\gamma_t = \frac{\gamma_0}{\mid S \mid t}$. 

\vspace{0.25em}

\noindent\textbf{ Setting the batch size $b$: } The batch size should be set by the user to have the desired properties.
Larger batches will take longer to clean and will make more progress towards the clean model but will have less frequent model updates.
On the other hand, smaller batches are cleaned faster and have more frequent model updates.
There are diminishing returns to increasing the batch size $O(\frac{1}{\sqrt{b}})$.
In the experiments, we use a batch size of 50 which converges fast but allows for frequent model updates.
If a data cleaning technique requires a larger batch size than 50, i.e., data cleaning is fast enough that the iteration overhead is significant compared to cleaning 50 records, \sys can apply the updates in smaller batches.
For example, the batch size set by the user might be $b=1000$, but the model updates after every $50$ records are cleaned.
We can disassociate the batching requirements of SGD and the batching requirements of the data cleaning technique.

\subsubsection{Convergence Conditions and Properties}
Convergence properties of batch SGD formulations have been well studied \cite{dekel2012optimal}. Essentially, if the gradient estimate is unbiased and the step size is appropriately chosen, the algorithm is guaranteed to converge. 
In Appendix \ref{appsgd}, we show that the gradient estimate from \sys is indeed unbiased and our choice of step size is one that is established to converge.
The convergence rates of SGD are also well analyzed \cite{dekel2012optimal,bertsekas2011incremental,zhao2014stochastic}. 
The analysis gives a bound on the error of intermediate models and the expected number of steps before achieving a model within a certain error. 
For a general convex loss, a batch size $b$, and $T$ iterations, the convergence rate is bounded by $O(\frac{\sigma^2}{\sqrt{bT}})$. 
$\sigma^2$ is the variance in the estimate of the gradient at each iteration:
\[
\mathbb{E}(\|g - g^*\|^2)
\]
where $g^*$ is the gradient computed over the full data if it were fully cleaned.
This property of SGD allows us to bound the model error with a monotonically decreasing function of the number of records cleaned, thus satisfying the reliability condition in the problem statement.
If the loss in non-convex, the update procedure will converge towards a local minimum rather than the global minimum (See Appendix \ref{non-convex}).

\subsection{Example}
This example describes an application of the update algorithm.
\begin{example}\label{upex}
Recall that the analyst has a dirty SVM model on the dirty data $\theta^{(d)}$.
She decides that she has a budget of cleaning $100$ records, and decides to clean the 100 records in batches of 10 (set based on how fast she can clean the data, and how often she wants to see an updated result).
All of the data is initially treated as dirty with $R_{dirty} = R$ and $R_{clean} = \emptyset$.
The gradient of a basic SVM is given by the following function:
\[
\nabla\phi(x,y,\theta) =
\begin{cases}      
-y\cdot\boldsymbol{x} \text{ if } y\cdot\boldsymbol{x}\cdot\theta \le 1 \\
~~~~~~~0\ \text{ if } y\ \boldsymbol{x}\cdot\theta \geq 1      
\end{cases}
\]

For each iteration $t$, a sample of 10 records $S$ is drawn from $R_{dirty}$.
\sys then applies the cleaning function to the sample:
\[
\{(x_i^{(c)},y_i^{(c)})\} = \{C(i): \forall i \in S\}
\]
Using these values, \sys estimates the gradient on the newly cleaned data:
\[
g_{S}(\theta) = \frac{1}{10} \sum_{i \in S}\frac{1}{p(i)}\nabla\phi(x_i^{(c)},y_i^{(c)},\theta)
\]
\sys also applies the gradient to the already clean data (initially non-existent):
\[
g_C(\theta) = \frac{1}{\mid R_{clean}\mid}\sum_{i \in R_{clean}}\nabla\phi(x_i^{(c)},y_i^{(c)},\theta)
\]
Then, it calculates the update rule:
\[
	\theta^{(t+1)} \leftarrow \theta^{(t)} - \gamma \cdot(\frac{\mid R_{dirty} \mid}{\mid R \mid} \cdot g_S(\theta^{(t)}) + \frac{\mid R_{clean} \mid}{\mid R \mid} \cdot  g_C(\theta^{(t)}))
\] 
Finally, $R_{dirty} \leftarrow R_{dirty} - S$, $R_{clean} \leftarrow R_{clean} + S$, and continue to the next iteration.
\end{example}

\section{Efficiency With Sampling}\label{dist-samp}
The \emph{updater} received a sample with probabilities $p(\cdot)$.
For any distribution where  $p(\cdot) > 0$, we can preserve correctness.
\sys uses a sampling algorithm that selects the most valuable records to clean with higher probability. 

\subsection{Oracle Sampling Problem}
Recall that the convergence rate of an SGD algorithm is bounded by $\sigma^2$ which is the variance of the gradient.
Intuitively, the variance measures how accurately the gradient is estimated from a uniform sample.
Other sampling distributions, while preserving the sample expected value, may have a lower variance.
Thus, the oracle sampling problem is defined as a search over sampling distributions to find the minimum variance sampling distribution.

\begin{definition}[Oracle Sampling Problem]
Given a set of candidate dirty data $R_{dirty}$, $\forall r \in R_{dirty}$ find sampling probabilities $p(r)$ such that over all samples $S$ of size $k$ it minimizes:
\[
\mathbb{E}(\|g_S - g^*\|^2)
\]
\end{definition}
It can be shown that the optimal distribution over records in $R_{dirty}$ is probabilities proportional to:
\[
p_i \propto \|\nabla\phi(x^{(c)}_i,y^{(c)}_i,\theta^{(t)})\|
\]
This is an established result, for thoroughness, we provide a proof in the appendix (Section \ref{impsample-deriv}), but intuitively, records with higher gradients should be sampled with higher probability as they affect the update more significantly.
However, \sys cannot exclude records with lower gradients as that would induce a bias hurting convergence.
The problem is that the optimal distribution leads to a chicken-and-egg problem:
the optimal sampling distribution requires knowing $(x^{(c)}_i,y^{(c)}_i)$, however, cleaning is required to know those values.

\subsection{Dirty Gradient Solution}\label{dgsample}
Such an oracle does not exist, and one solution is to use the gradient w.r.t to the dirty data:
\[
p_i \propto \|\nabla\phi(x^{(d)}_i,y^{(d)}_i,\theta^{(t)})\|
\]
It turns out that the solution works reasonably well in practice on our experimental datasets and has been studied in Machine Learning as the Expected Gradient Length heuristic \cite{settles2010active}.
The contribution in this work is integrating this heuristic with statistically correct updates.
However, intuitively, approximating the oracle as closely as possible can result in improved prioritization.
The subsequent section describes two components, the detector and estimator, that can be used to improve the convergence rate.
Our experiments suggest up-to a 2x improvement in convergence when using these optional optimizations (Section \ref{comp}).
\section{Optimizations}\label{opti}

In this section, we describe two approaches to optimization, the {\it Detector} and the {\it Estimator}, that
improve the efficiency of the cleaning process.  
Both approaches are designed to increase the likelihood that the 
{\it Sampler} will pick dirty records that, once cleaned,
most move the model towards the true clean model.
The {\it Detector} is intended to learn the characteristics that distinguish dirty records from clean records
while the {\it Estimator} is designed to estimate the amount that cleaning a given dirty record will move the 
model towards the true optimal model.

\subsection{The Detector}\label{det}

The detector returns two important aspects of a record: 
(1) whether the record is dirty, and (2) if it is dirty, what is wrong with the record.
The sampler can use (1) to select a subset of dirty records to sample at each batch and 
the estimator can use (2) estimate the value of data cleaning based on other records with the same corruption.
\sys supports two types of detectors: \emph{a priori} and \emph{adaptive}.
In former assumes that we know the set of dirty records and how they are dirty \emph{a priori} to \sys,
while the latter \emph{adaptively} learns characteristics of the dirty data as part of running \sys.

\subsubsection{\protect\textit{\large A Priori} Detector}
For many types of dirtiness such as missing attribute values and constraint violations, 
it is possible to efficiently enumerate a set of corrupted records and determine how the records are corrupted.

\begin{definition}[A Priori Detection]
Let $r$ be a record in $R$. An a priori detector is a detector that returns a Boolean of whether the record is dirty and a set of columns $e_r$ that are dirty.
\[
D(r) = (\{0,1\}, e_r)
\]
From the set of columns that are dirty, find the corresponding features that are dirty $f_r$ and labels that are dirty $l_r$.
\end{definition}

\noindent Here is an example this definition using a data cleaning methodology proposed in the literature.

\vspace{0.5em}

\noindent\textbf{Constraint-based Repair: }
One model for detecting errors involves declaring constraints on the database.

\vspace{0.5em}

\emph{Detection. } Let $\Sigma$ be a set of constraints on the relation $\mathcal{R}$. 
In the detection step, the detector selects a subset of records $\mathcal{R}_{dirty} \subseteq \mathcal{R}$ that violate at least one constraint.
The set $e_r$ is the set of columns for each record which have a constraint violation. 

\begin{example}\label{detex1}
An example of a constraint on the running example dataset is that the \texttt{status} of
a contribution can be only ``allowed" or ``disallowed".
Any other value for \texttt{status} is an error.
\end{example}

\subsection{Adaptive Detection}
\emph{A priori} detection is not possible in all cases.
The detector also supports adaptive detection where detection is learned from previously cleaned data.
Note that this ``learning" is distinct from the ``learning" at the end of the pipeline.
The challenge in formulating this problem is that detector needs to describe how the data is dirty (e.g. $e_r$ in the \emph{a priori} case).
The detector achieves this by categorizing the corruption into $u$ classes.
These classes are corruption categories that do not necessarily align with features, but every record is classified with at most one category.

When using adaptive detection, the repair step has to clean the data and report to which of the $u$ classes the corrupted record belongs.
When an example $(x,y)$ is cleaned, the repair step labels it with one of the ${\text{clean}, 1,2,...,u+1}$ classes (including one for ``not dirty").
It is possible that $u$ increases each iteration as more types of dirtiness are discovered.
In many real world datasets, data errors have locality, where similar records tend to be similarly corrupted.
There are usually a small number of error classes even if a large number of records are corrupted.

One approach for adaptive detection is using a statistical classifier. 
This approach is particularly suited for a small number data error classes, each of which containing many erroneous records.
This problem can be addressed by any classifier, and we use an all-versus-one SVM in our experiments.

Another approach could be to adaptively learn predicates that define each of the error classes.
For example, if records with certain attributes are corrupted, a pattern tableau can be assigned to each class to select a set of possibly corrupted records.
This approach is better suited than a statistical approach for a large number of error classes or scarcity of errors.
However, it relies on errors being well aligned with certain attribute values.

\begin{definition}[Adaptive Case]
Select the set of records for which $\kappa$ gives a positive error classification (i.e., one of the $u$ error classes).
After each sample of data is cleaned, the classifier $\kappa$ is retrained.
So the result is:
\[D(r) = (\{1,0\},\{1,...,u+1\})\]
\end{definition}

\vspace{0.75em}

\noindent\textbf{Adaptive Detection With OpenRefine: }
\begin{example}\label{detex2}
OpenRefine is a spreadsheet-based tool that allows users to explore and transform data.
However, it is limited to cleaning data that can fit in memory on a single computer.
Since the cleaning operations are coupled with data exploration, \sys does not know what is dirty in advance (the analyst may discover new errors as she cleans).

Suppose the analyst wants to use OpenRefine to clean the running example dataset with \sys.
She takes a sample of data from the entire dataset and uses the tool to discover errors.
For example, she finds that some drugs are incorrectly classified as both drugs and devices.
She then removes the device attribute for all records that have the drug name in question.
As she fixes the records, she tags each one with a category tag of which corruption it belongs to.
\end{example}

\subsection{The Estimator}\label{sampling}
To get around the problem with oracle sampling, the estimator will estimate the cleaned value with previously cleaned data.
The estimator will also take advantage of the detector from the previous section.
There are a number of different approaches, such as regression, that could be used to estimate the cleaned value given the dirty values.
However, there is a problem of scarcity, where errors may affect a small number of records.
As a result, the regression approach would have to learn a multivariate function with only a few examples.
Thus, high-dimensional regression ill-suited for the estimator.
Conversely, it could try a very simple estimator that just calculates an average change and adds this change to all of the gradients.
This estimator can be highly inaccurate as it also applies the change to records that are known to be clean.

\sys leverages the detector for an estimator between these two extremes.
The estimator calculates average changes feature-by-feature and selectively corrects the gradient when a feature is known to be corrupted based on the detector.
It also applies a linearization that leads to improved estimates when the sample size is small.
We evaluate the linearization in Section \ref{est} against alternatives, and find that it provides more accurate estimates for a small number of samples cleaned.
The result is a biased estimator, and when the number of cleaned samples is large the alternative techniques are comparable or even slightly better due to the bias.

\paragraph{Estimation For A Priori Detection}
If most of the features are correct, it would seem like the gradient is only
incorrect in one or two of its components.
The problem is that the gradient $\nabla\phi(\cdot)$ can be a very non-linear function of the features that couple features together.
For example, the gradient for linear regression is:
\[
\nabla\phi(x,y,\theta) = (\theta^Tx - y)x
\]
It is not possible to isolate the effect of a change of one feature on the gradient.
Even if one of the features is corrupted, all of the gradient components will be incorrect.

To address this problem, the gradient can be approximated in a way that the effects of dirty features on the gradient are decoupled.
Recall, in the \emph{a priori} detection problem, that associated with each $r \in R_{dirty}$ is a set of errors $f_r,l_r$ which is a set that identifies a set of corrupted features and labels.
This property can be used to construct a coarse estimate of the clean value.
The main idea is to calculate average changes for each feature, then given an uncleaned (but dirty) record, add these average changes to correct the gradient.

To formalize the intuition, instead of computing the actual gradient with respect to the 
true clean values, compute the conditional expectation given that a set of features and labels $f_r,l_r$ are corrupted:
\[
p_i \propto \mathbb{E}(\nabla\phi(x^{(c)}_i,y^{(c)}_i,\theta^{(t)}) \mid f_r,l_r)
\]
Corrupted features are defined as that:
\[
i \notin f_r \implies x^{(c)}[i] - x^{(d)}[i] = 0
\]
\[
i \notin l_r \implies y^{(c)}[i] - y^{(d)}[i] = 0
\]

The needed approximation represents a linearization of the errors, and the resulting approximation will be of the form:
\[
p(r)\propto\|\nabla\phi(x,y,\theta^{(t)}) + M_x \cdot \Delta_{rx} +  M_y \cdot \Delta_{ry}\|
\]
where $M_x$, $M_y$ are matrices and $\Delta_{rx}$ and $\Delta_{ry}$ are vectors with one component for each feature and label where each value is the average change for those features that are corrupted and 0 otherwise.
Essentially, it the gradient with respect to the dirty data plus some linear correction factor.
In the appendix, we present a derivation using a Taylor series expansion and a number of $M_x$ and $M_y$ matrices for common convex losses (Appendix \ref{apptaylor} and \ref{example-deriv}).
The appendix also describes how to maintain $\Delta_{rx}$ and $\Delta_{ry}$ as cleaning progresses.

\paragraph{Estimation For Adaptive Case}
A similar procedure holds in the adaptive setting, however, it requires reformulation.
Here, \sys uses $u$ corruption classes provided by the detector.
Instead of conditioning on the features that are corrupted, the estimator conditions on the classes.
So for each error class, it computes a $\Delta_{ux}$ and $\Delta_{uy}$.
These are the average change in the features given that class and the average change in labels given that class.
\[
p(r_u)\propto\|\nabla\phi(x,y,\theta^{(t)}) + M_x \cdot \Delta_{ux} +  M_y \cdot \Delta_{uy}\|
\] 

\subsubsection{Example}
Here is an example of using the optimization to select a sample of data for cleaning.
\begin{example}\label{estex}
Consider using \sys with an a priori detector.
Let us assume that there are no errors in the labels and only errors in the features.
Then, each training example will have a set of corrupted features (e.g., $\{1,2,6\}$, $\{1,2,15\}$).
Suppose that the cleaner has just cleaned the records $r_1$ and $r_2$ represented as tuples with their corrupted feature set: ($r_1$,$\{1,2,3\}$), ($r_2$,$\{1,2,6\}$).
For each feature $i$, \sys maintains the average change between dirty and clean in a value in a vector $\Delta_x[i]$ for those records corrupted on that feature. 

Then, given a new record ($r_3$,$\{1,2,3,6\}$), $\Delta_{r_3x}$ is the vector $\Delta_x$ where component $i$ is set to 0 if the feature is not corrupted.
Suppose the data analyst is using an SVM, then the $M_x$ matrix is as follows:
\[
M_x[i,i] = \begin{cases}      
-y[i] \text{ if } y\cdot\boldsymbol{x}\cdot\theta \le 1 \\
0\ \text{ if } y\ \boldsymbol{x}\cdot\theta \geq 1      
\end{cases} 
\]
Thus, we calculate a sampling weight for record $r_3$:
\[
p(r_3) \propto\|\nabla\phi(x,y,\theta^{(t)}) + M_x \cdot \Delta_{r_3x} \|
\] 
To turn the result into a probability distribution, \sys normalizes over all dirty records.
\end{example}

\section{Experiments}\label{eval}
First, the experiments evaluate how various types of corrupted data benefit from data cleaning.
Next, the experiments explore different prioritization and model update schemes for progressive data cleaning.
Finally, \sys is evaluated end-to-end in a number of real-world data cleaning scenarios.

\subsection{Experimental Setup and Notation}
The main metric for evaluation is a relative measure of the trained model and the model if all of the data is cleaned.

\vspace{0.5em}

\noindent\textbf{Relative Model Error. } Let $\theta$ be the model trained on the dirty data, and let $\theta^*$ be the model trained on the same data if it was cleaned. Then the model error is defined as $\frac{\|\theta - \theta^*\|}{\|\theta^*\|}$.

\subsubsection{Scenarios}


\vspace{0.25em}

\noindent\textbf{Income Classification (Adult): } In this dataset of 45,552 records, the task is to predict the income bracket (binary) from 12 numerical and categorical covariates with an SVM classifier. 

\vspace{0.25em}

\noindent\textbf{Seizure Classification (EEG): } In this dataset, the task is to predict the onset of a seizure (binary) from 15 numerical covariates with a thresholded Linear Regression. There are 14980 data points in this dataset. This classification task is inherently hard with an accuracy on completely clean data of only 65\%.

\vspace{0.25em}

\noindent\textbf{Handwriting Recognition (MNIST) \footnote{\scriptsize\url{http://ufldl.stanford.edu/wiki/index.php/Using_the_MNIST_Dataset}}: } In this dataset, the task is to classify 60,000 images of handwritten images into 10 categories with an one-to-all multiclass SVM classifier. The unique part of this dataset is the featurized data consists of a 784 dimensional vector which includes edge detectors and raw image patches. 

\vspace{0.25em}

\noindent\textbf{Dollars For Docs: } The dataset has 240,089 records with 5 textual attributes and one numerical attribute.
The dataset is featurized with bag-of-words featurization model for the textual attributes which resulted in a 2021 dimensional feature vector, and a binary SVM is used to classify the status of the medical donations.

\vspace{0.25em}

\noindent\textbf{World Bank: } The dataset has 193 records of country name, population, and various macro-economics statistics. The values are listed with the date at which they were acquired. This allowed us to determine that records from smaller and less populous countries were more likely to be out-of-date.

\subsubsection{Compared Algorithms}
\noindent Here are the alternative methodologies evaluated in the experiments:

\vspace{0.25em}

\noindent\textbf{Robust Logistic Regression \cite{feng2014robust}. } Feng et al. proposed a variant of logistic regression that is robust to outliers. We chose this algorithm because it is a robust extension of the convex regularized loss model, leading to a better apples-to-apples comparison between the techniques. (See details in Appendix \ref{rlogit})  

\vspace{0.25em}

\noindent\textbf{Discarding Dirty Data. } As a baseline, dirty data are discarded.

\vspace{0.25em}

\noindent\textbf{SampleClean (SC) \cite{wang1999sample}. } SampleClean takes a sample of data, applies data cleaning, and then trains a model to completion on the sample.

\vspace{0.25em}

\noindent\textbf{Active Learning (AL) \cite{guillory2009active}. } To fairly evaluate Active Learning, we first apply our gradient update to ensure correctness.
Within each iteration, examples are prioritized by distance to the decision boundary (called Uncertainty Sampling in \cite{settles2010active}).
However, we do not include our optimizations such as detection and estimation.

\vspace{0.25em}

\noindent\textbf{ActiveClean Oracle (AC+O): } In \sys Oracle, instead of an estimation and detection step, the true clean value is used to evaluate the theoretical ideal performance of \sys.

\subsection{Does Data Cleaning Matter?}
The first experiment evaluates the benefits of data cleaning on two of the example datasets (EEG and Adult).
Our goal is to understand which types of data corruption are amenable to data cleaning and which are better suited for robust statistical techniques.
The experiment compares four schemes: (1) full data cleaning  , (2) baseline of no cleaning, (3) discarding the dirty data, and (4) robust logistic regression,. We corrupted 5\% of the training examples in each dataset in two different ways:

\vspace{0.5em}

\noindent\textbf{Random Corruption: } Simulated high-magnitude random outliers. 5\% of the examples are selected at random and a random feature is replaced with 3 times the highest feature value.

\vspace{0.5em}

\noindent\textbf{Systematic Corruption: } Simulated innocuous looking (but still incorrect) systematic corruption. The model is trained on the clean data, and the three most important features (highest weighted) are identified. The examples are sorted by each of these features and the top examples are corrupted with the mean value for that feature (5\% corruption in all). 
It is important to note that examples can have multiple corrupted features.

\begin{figure}[t]
\centering
 \includegraphics[width=0.49\columnwidth]{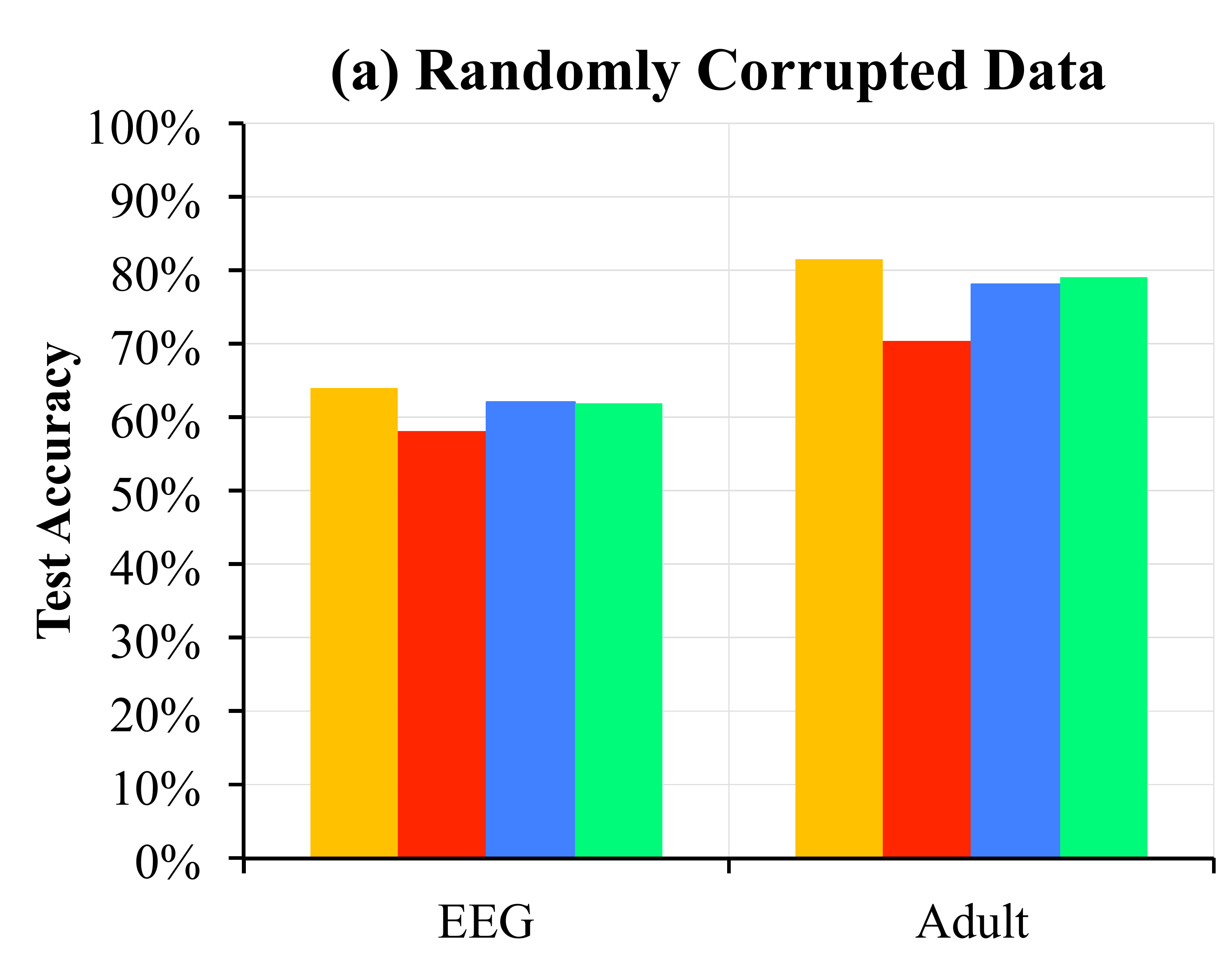}
 \includegraphics[width=0.49\columnwidth]{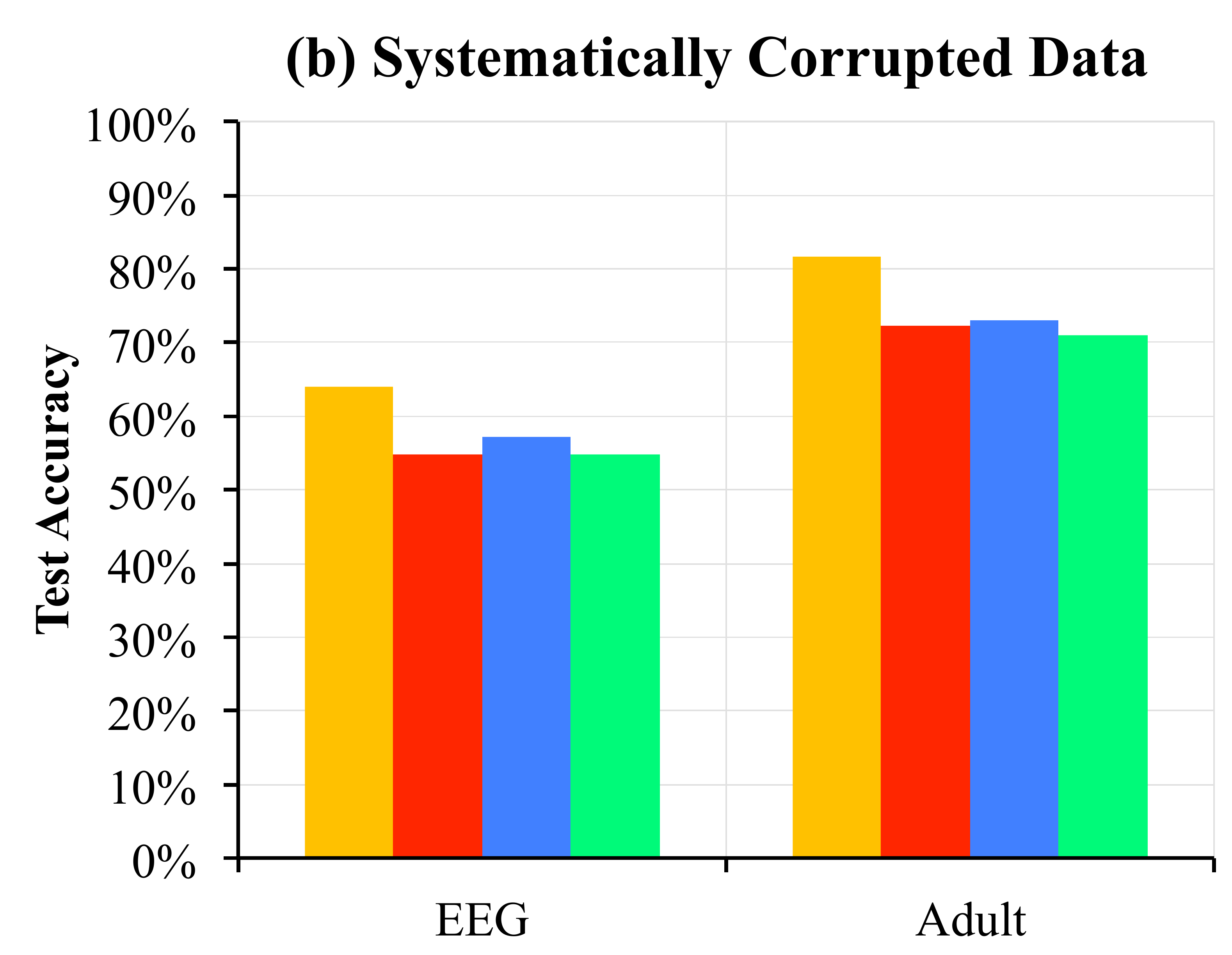}
 \includegraphics[width=0.5\columnwidth]{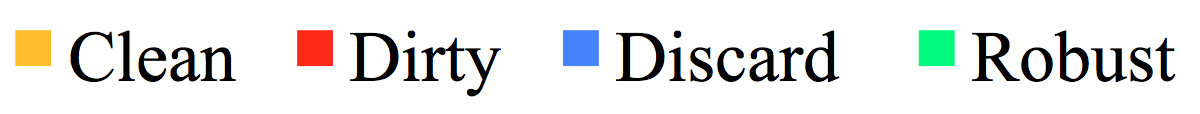}\vspace{-1em}
 \caption{(a) Robust techniques and discarding data work when corrupted data are random and look atypical. (b) Data cleaning can provide reliable performance in both the systematically corrupted setting and randomly corrupted setting.\label{sys-rand}}\vspace{-1.5em}
\end{figure}

Figure \ref{sys-rand} shows the test accuracy for models trained on both types of data with the different techniques.
The robust method performs well on the random high-magnitude outliers with only a 2.0\% reduction in clean test accuracy for EEG and 2.5\% reduction for Adult.
In the random setting, discarding dirty data also performs relatively well.
However, the robust method falters on the systematic corruption with a 9.1\% reduction in clean test accuracy for EEG and 10.5\% reduction for Adult.
The problem is that without cleaning, there is no way to know if the corruption is random or systematic and when to trust a robust method.
While data cleaning requires more effort, it provides benefits in both settings.
In the remaining experiments, unless otherwise noted, the experiments use systematic corruption.

\noindent \emph{Summary: A 5\% systematic corruption can introduce a 10\% reduction in test accuracy even when using a robust method.}

\subsection{\sys: \protect\textit{\large A Priori} Detection}
The next set of experiments evaluate different approaches to cleaning a sample of data compared to \sys using \emph{a priori} detection.
\emph{A priori} detection assumes that all of the corrupted records are known in advance but their clean values are unknown. 

\subsubsection{Active Learning and SampleClean}
The next experiment evaluates the samples-to-error tradeoff between four alternative algorithms: \sys (AC), SampleClean, Active Learning, and \sys+Oracle (AC+O).
Figure \ref{prio-perf} shows the model error and test accuracy as a function of the number of cleaned records.
In terms of model error, \sys gives its largest benefits for small sample sizes.
For 500 cleaned records of the Adult dataset, \sys has 6.1x less error than SampleClean and 2.1x less error than Active Learning.
For 500 cleaned records of the EEG dataset, \sys has 9.6x less error than SampleClean and 2.4x less error than Active Learning.
Both Active Learning and \sys benefit from the initialization with the dirty model as they do not retrain their models from scratch, and \sys improves on this performance with detection and error estimation.
Active Learning has no notion of dirty and clean data, and therefore prioritizes with respect to the dirty data.
These gains in model error also correlate well to improvements in test error (defined as the test accuracy difference w.r.t cleaning all data).
The test error converges more quickly than model error, emphasizing the benefits of progressive data cleaning, since it is not neccessary to clean all the data to get a model with essentially the same performance as the clean model.
For example, to achieve a test error of 1\% on the Adult dataset, \sys cleans 500 fewer records than Active Learning.

\vspace{0.25em}

\noindent \emph{Summary: \sys with a priori detection returns results that are more than 6x more accurate than SampleClean and 2x more accurate than Active Learning for cleaning 500 records.}

\begin{figure}[t]
\centering\vspace{-1em}
 \includegraphics[width=0.49\columnwidth]{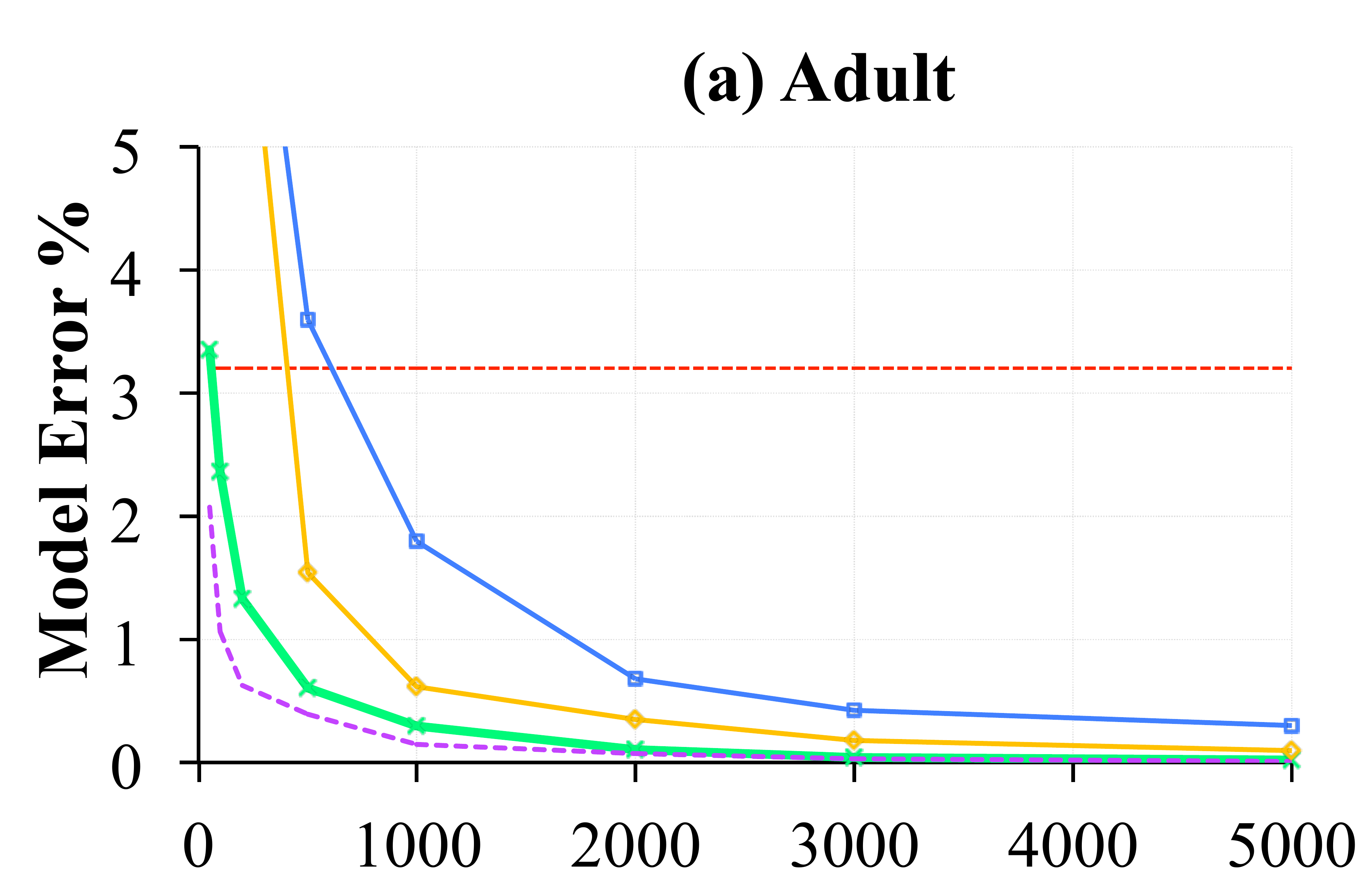}
  \includegraphics[width=0.49\columnwidth]{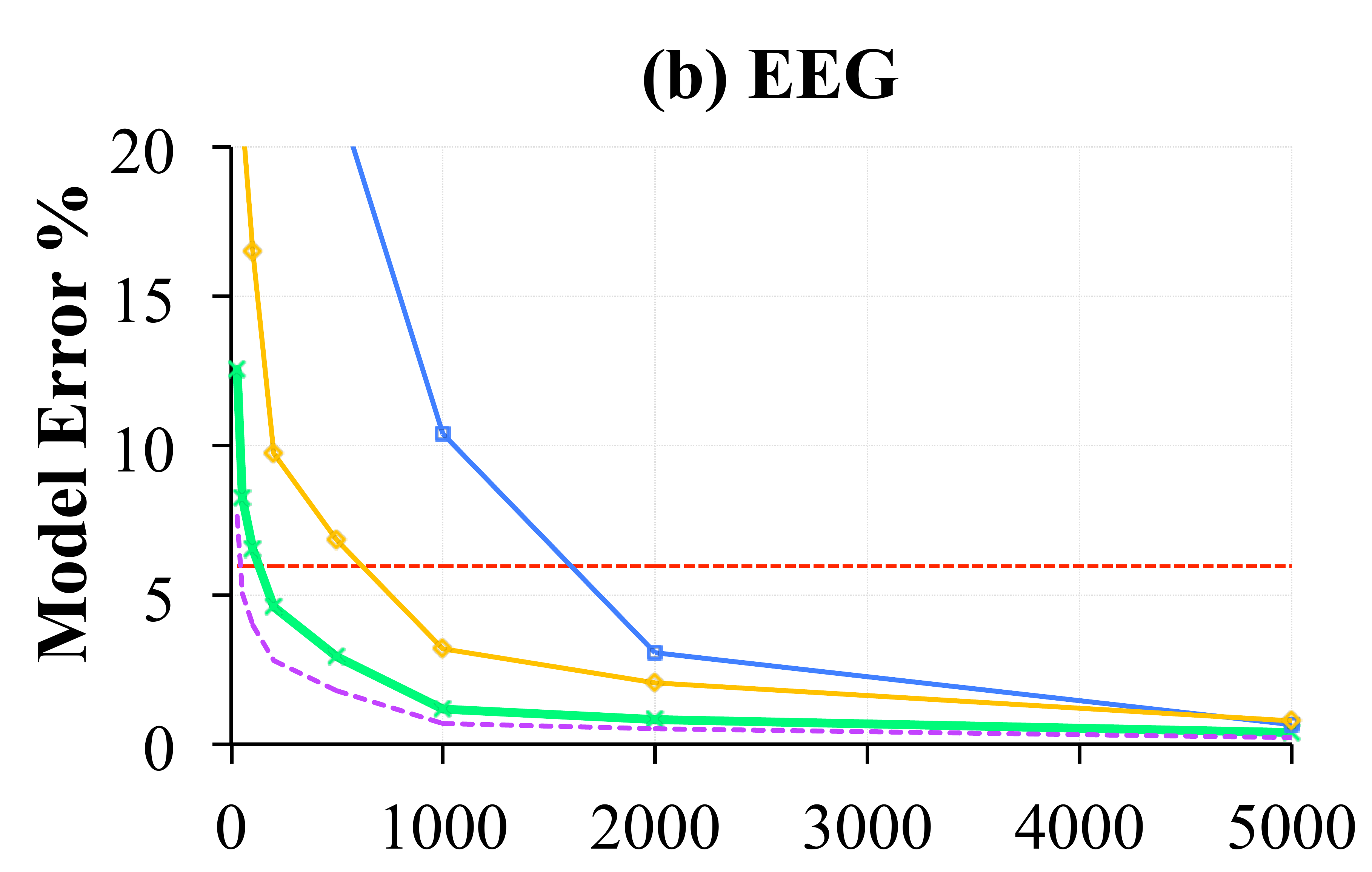}
  \includegraphics[width=0.49\columnwidth]{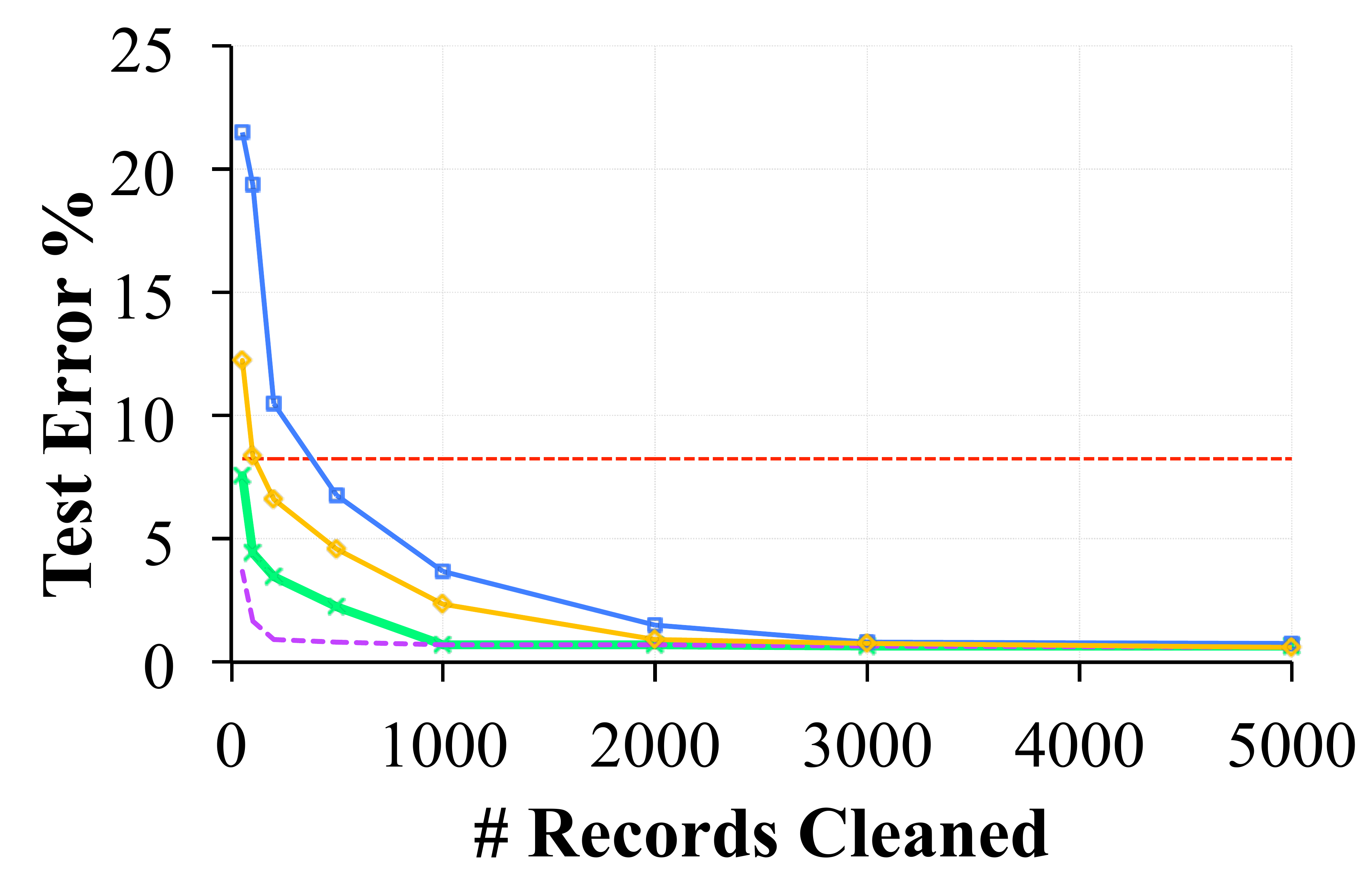}
  \includegraphics[width=0.49\columnwidth]{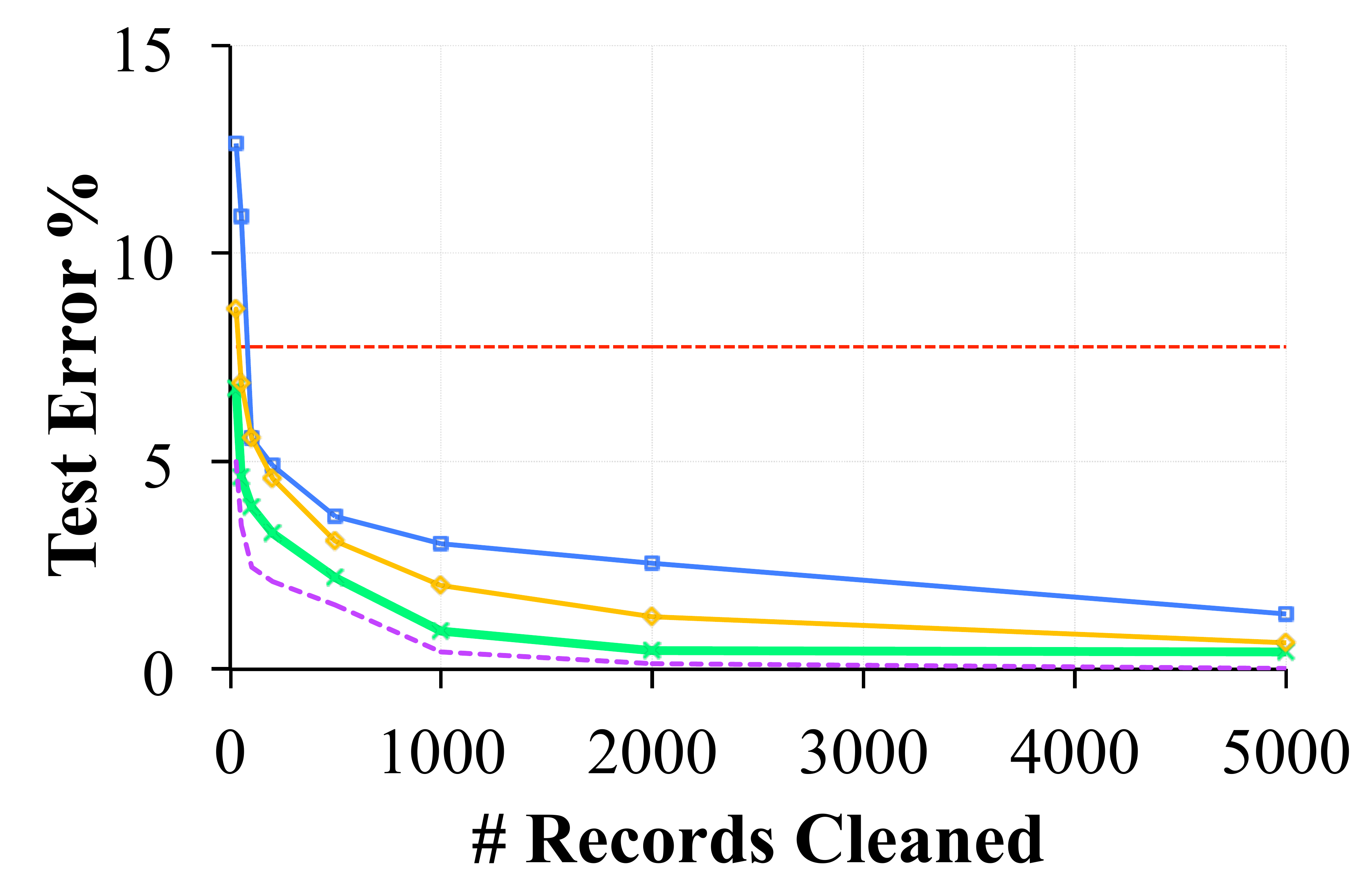}
  \includegraphics[width=0.5\columnwidth]{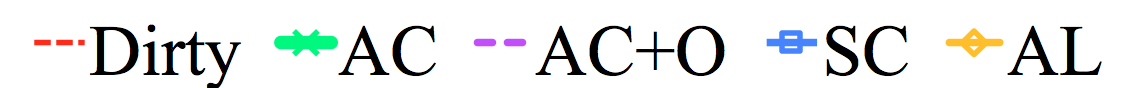}\vspace{-0.5em}
 \caption{ The relative model error as a function of the number of examples cleaned. \sys converges with a smaller sample size to the true result in comparison to Active Learning and SampleClean. \label{prio-perf}}\vspace{-1em}
\end{figure}

\subsubsection{Source of Improvements}\label{comp}
The next experiment compares the performance of \sys with and without various optimizations at 500 records cleaned point. 
\sys without detection is denoted as (AC-D) (that is at each iteration we sample from the entire dirty data), and \sys without detection and importance sampling is denoted as (AC-D-I).
Figure \ref{opts} plots the relative error of the alternatives and \sys with and without the optimizations.
Without detection (AC-D), \sys is still more accurate than Active Learning.
Removing the importance sampling, \sys is slightly worse than Active Learning on the Adult dataset but is comparable on the EEG dataset.

\begin{figure}[t]\vspace{0.5em}
\centering
 \includegraphics[width=0.49\columnwidth]{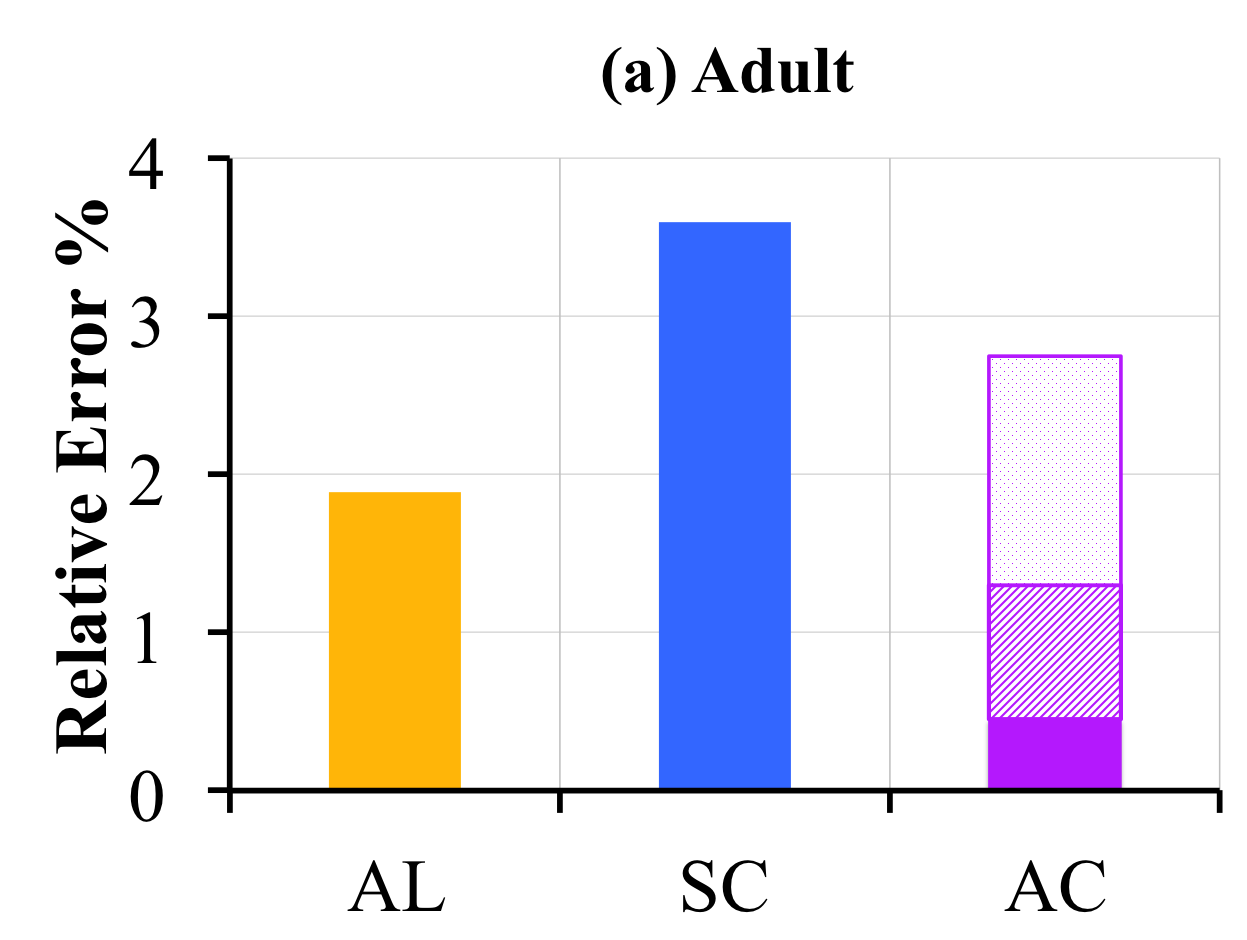}
 \includegraphics[width=0.49\columnwidth]{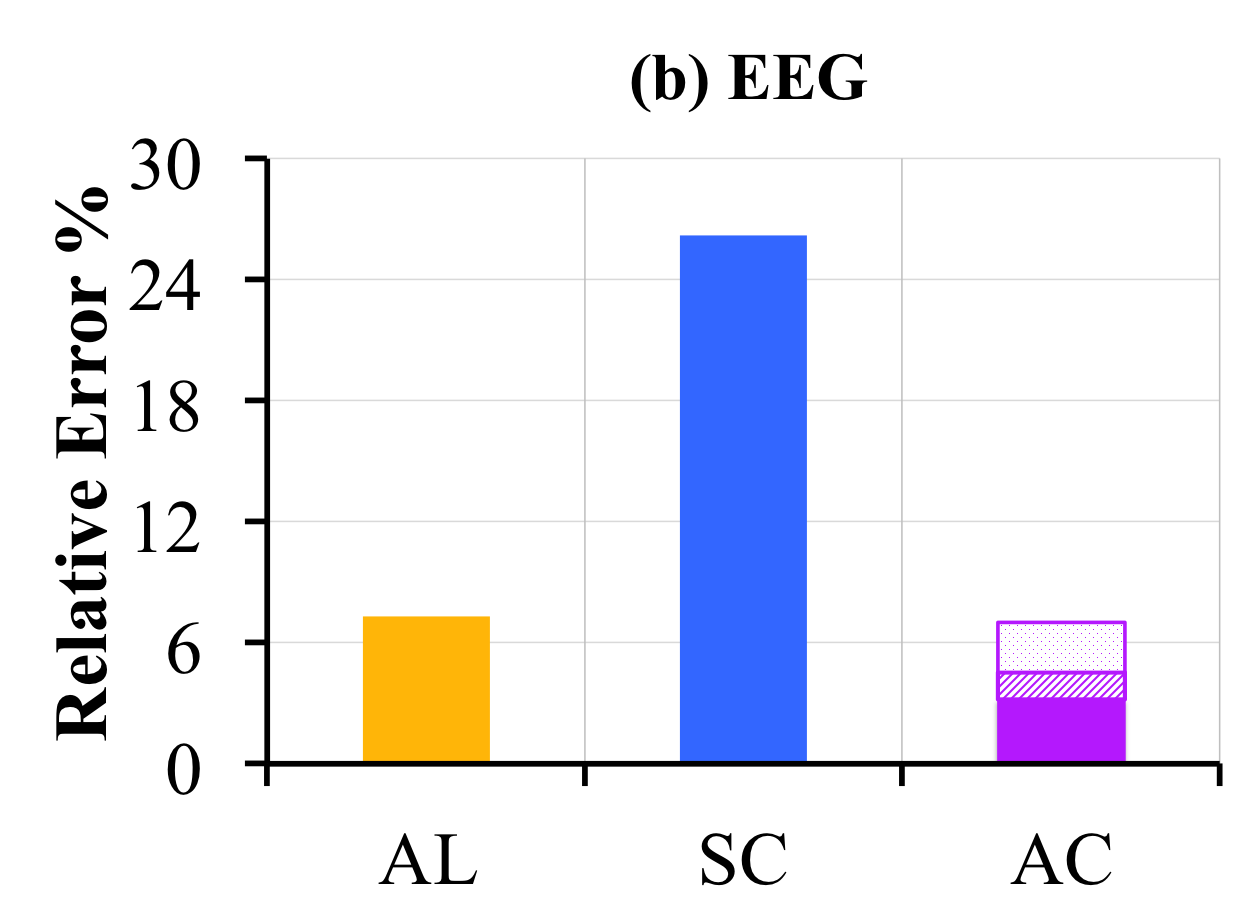}
 \includegraphics[width=0.5\columnwidth]{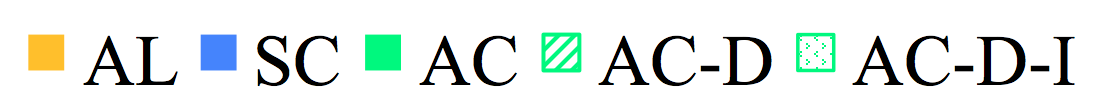}\vspace{-1em}
 \caption{ -D denotes no detection, and -D-I denotes no detection and no importance sampling. Both optimizations significantly help \sys outperform SampleClean and Active Learning. \label{opts}}\vspace{-1.5em}
\end{figure}

\vspace{0.25em}

\noindent \emph{Summary: Both a priori detection and non-uniform sampling significantly contribute to the gains over Active Learning.}

\subsubsection{Mixing Dirty and Clean Data}
Training a model on mixed data is an unreliable methodology lacking the same guarantees as Active Learning or SampleClean even in the simplest of cases.
For thoroughness, the next experiments include the model error as a function of records cleaned in comparison to \sys.
Figure \ref{pc-perf} plots the same curves as the previous experiment comparing \sys, Active Learning, and two mixed data algorithms.
PC randomly samples data, clean, and writes-back the cleaned data.
PC+D randomly samples data from using the dirty data detector, cleans, and writes-back the cleaned data.
For these errors PC and PC+D give reasonable results (not always guaranteed), but \sys converges faster.
\sys tunes the weighting when averaging dirty and clean data into the gradient.

\begin{figure}[ht!]
\centering\vspace{-0.5em}
 \includegraphics[width=0.49\columnwidth]{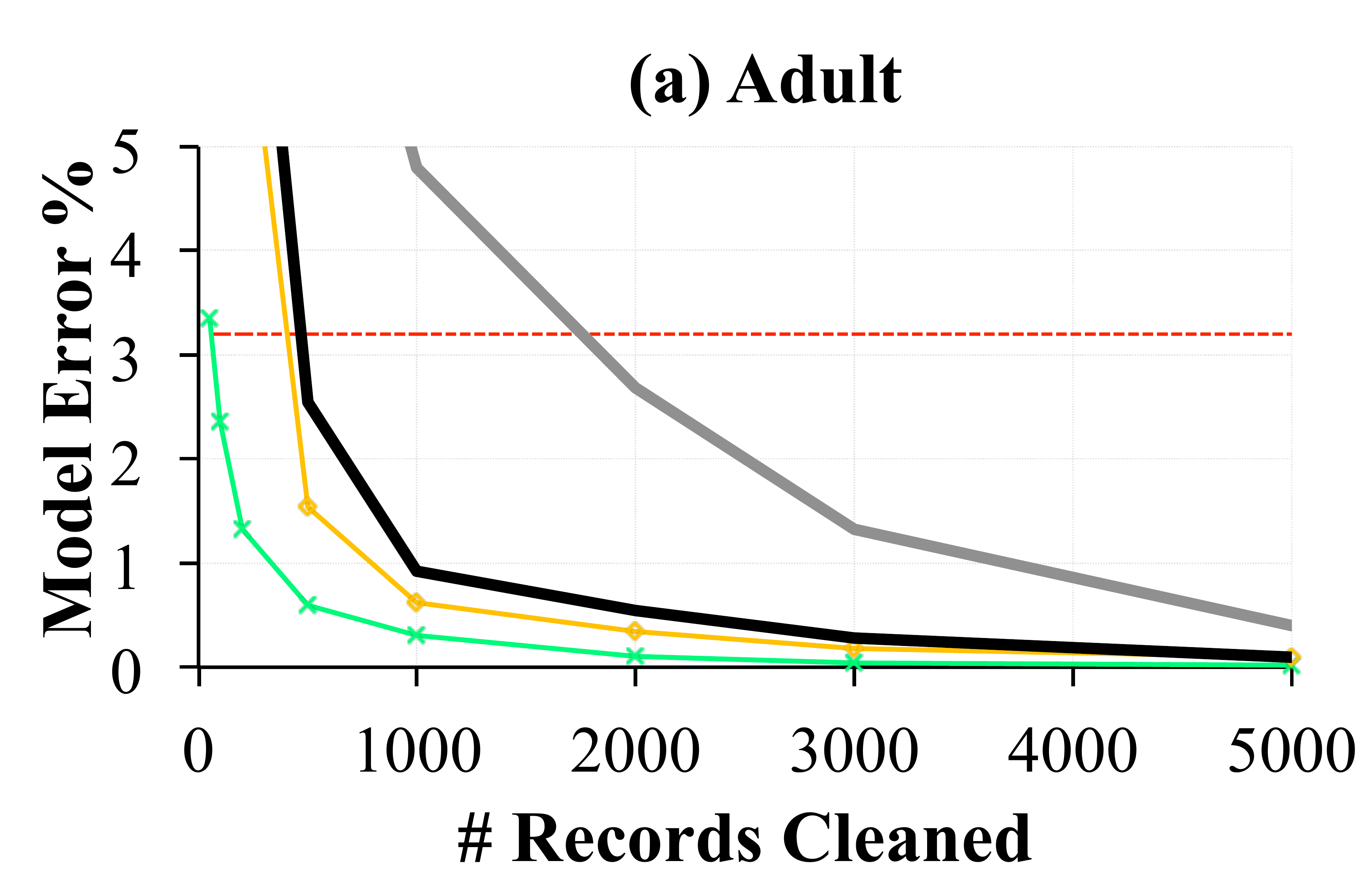}
    \includegraphics[width=0.49\columnwidth]{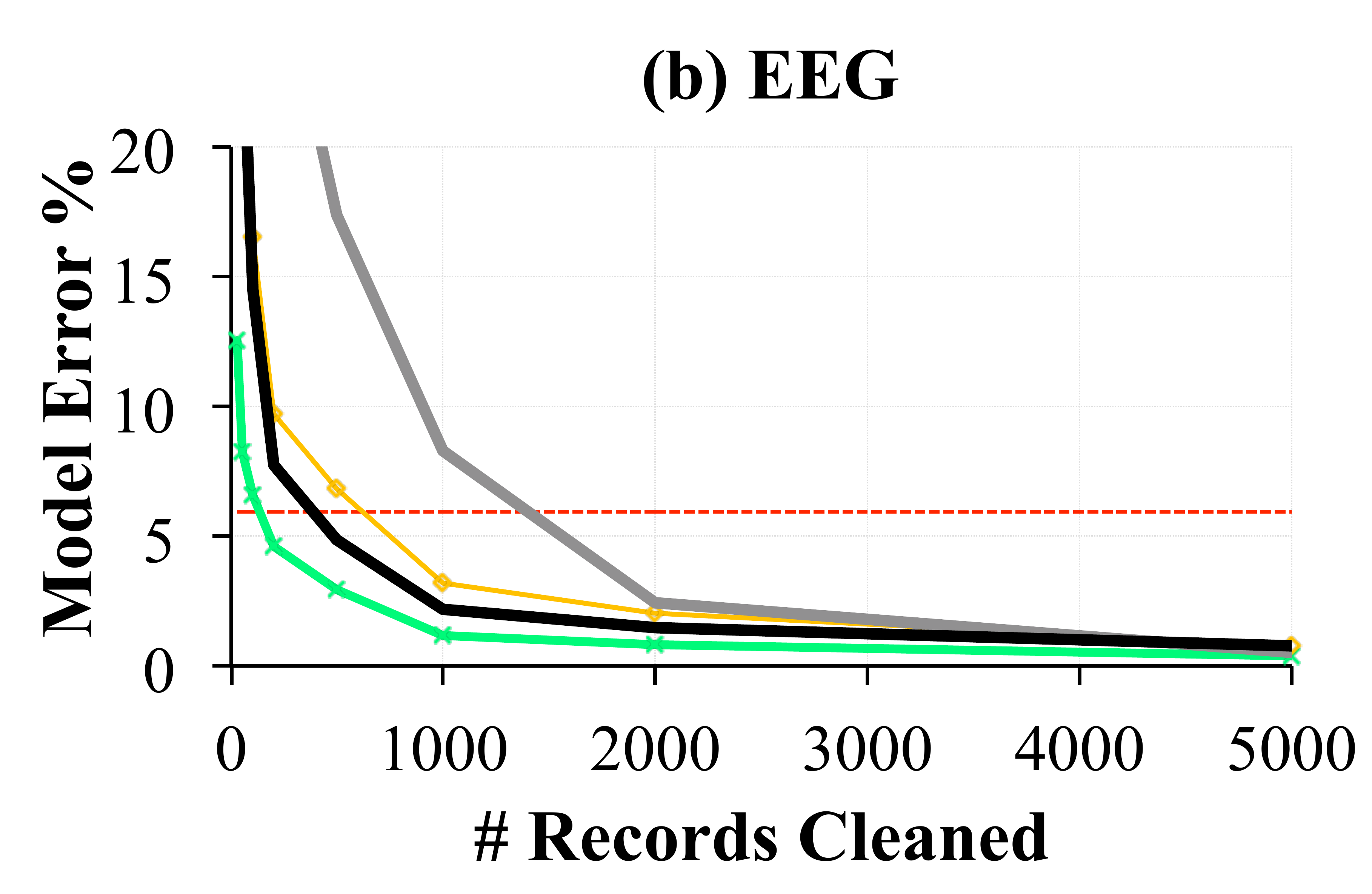}
    \includegraphics[width=0.49\columnwidth]{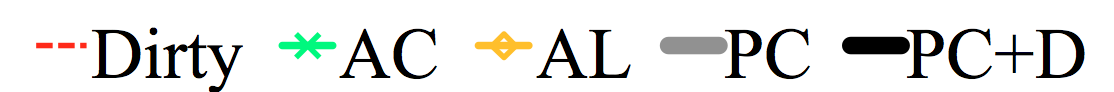}\vspace{-0.5em}
 \caption{The relative model error as a function of the number of examples cleaned. \sys converges with a smaller sample size to the true result in comparison to partial cleaning (PC,PC+D).  \label{pc-perf}}
\end{figure}

\noindent \emph{Summary: \sys converges faster than mixing dirty and clean data since it reweights data based on the fraction that is dirty and clean. Partial cleaning is not guaranteed to give sensible results.}

\vspace{1em}

\subsubsection{Corruption Rate}
The next experiment explores how much of the performance
is due to the initialization with the dirty model (i.e., SampleClean trains a model from ``scratch").
Figure \ref{bias} varies the systematic corruption rate and plots the number of records cleaned to achieve 1\% relative error for SampleClean and \sys.
SampleClean does not use the dirty data and thus its error is essentially governed by the Central Limit Theorem.
SampleClean outperforms \sys only when corruptions are very severe (45\% in Adult and nearly 60\% in EEG).
When the initialization with the dirty model is inaccurate, \sys does not perform as well. 

\begin{figure}[t]
\centering
 \includegraphics[width=0.49\columnwidth]{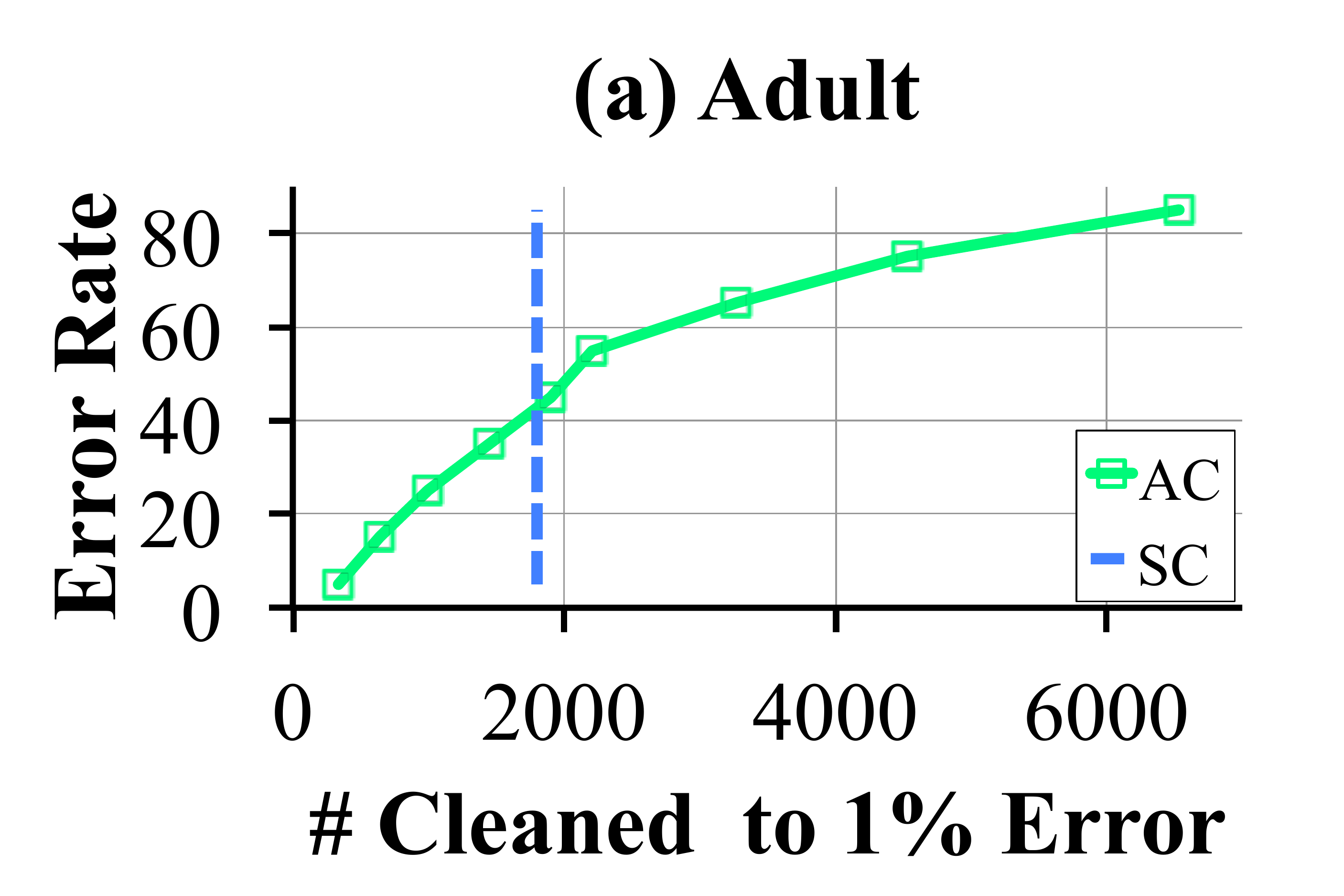}
  \includegraphics[width=0.49\columnwidth]{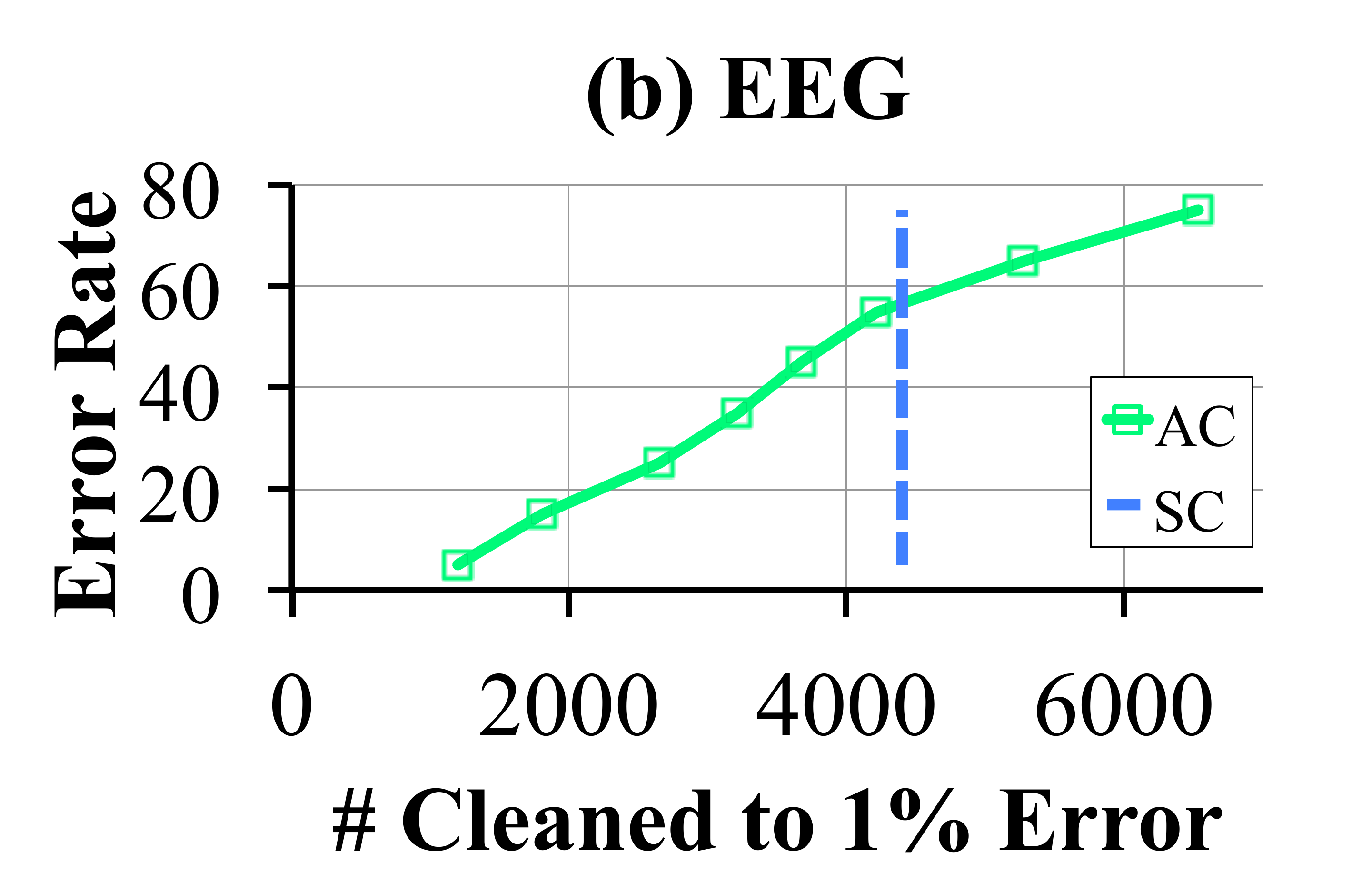}\vspace{-1em}
 \caption{\sys performs well until the corruption is so severe that the dirty model is not a good initialization. The error of SampleClean does not depend on the corruption rate so it is a vertical line.  \label{bias}}\vspace{-1.5em}
\end{figure}

\vspace{0.25em}

\noindent \emph{Summary: SampleClean is beneficial in comparison to \sys when corruption rates exceed 45\%.}

\subsection{\sys: Adaptive Detection}
This experiment explores how the results of the previous experiment change when using an adaptive detector instead of the \emph{a priori} detector.
Recall, in the systematic corruption, 3 of the most informative features were corrupted, thus we group these problems into $9$ classes.
We use an all-versus-one SVM to learn the categorization.

\subsubsection{Basic Performance}
Figure \ref{pred-perf} overlays the convergence plots in the previous experiments with a curve (denoted by AC+C) that represents \sys using a classifier instead of the \emph{a priori} detection. Initially \sys is comparable to Active Learning; however, as the classifier becomes more effective the detection improves the performance.
Over both datasets, at the 500 records point on the curve, adaptive \sys has a 30\% higher model error compared to \emph{a priori} \sys.
At 1000 records point on the curve, adaptive \sys has about 10\% higher error.

\vspace{0.25em}

\noindent \emph{Summary: For 500 records cleaned, adaptive \sys has a 30\% higher model error compared to a priori \sys, but still outperforms Active Learning and SampleClean.}

\begin{figure}[ht!]
\centering
 \includegraphics[width=0.49\columnwidth]{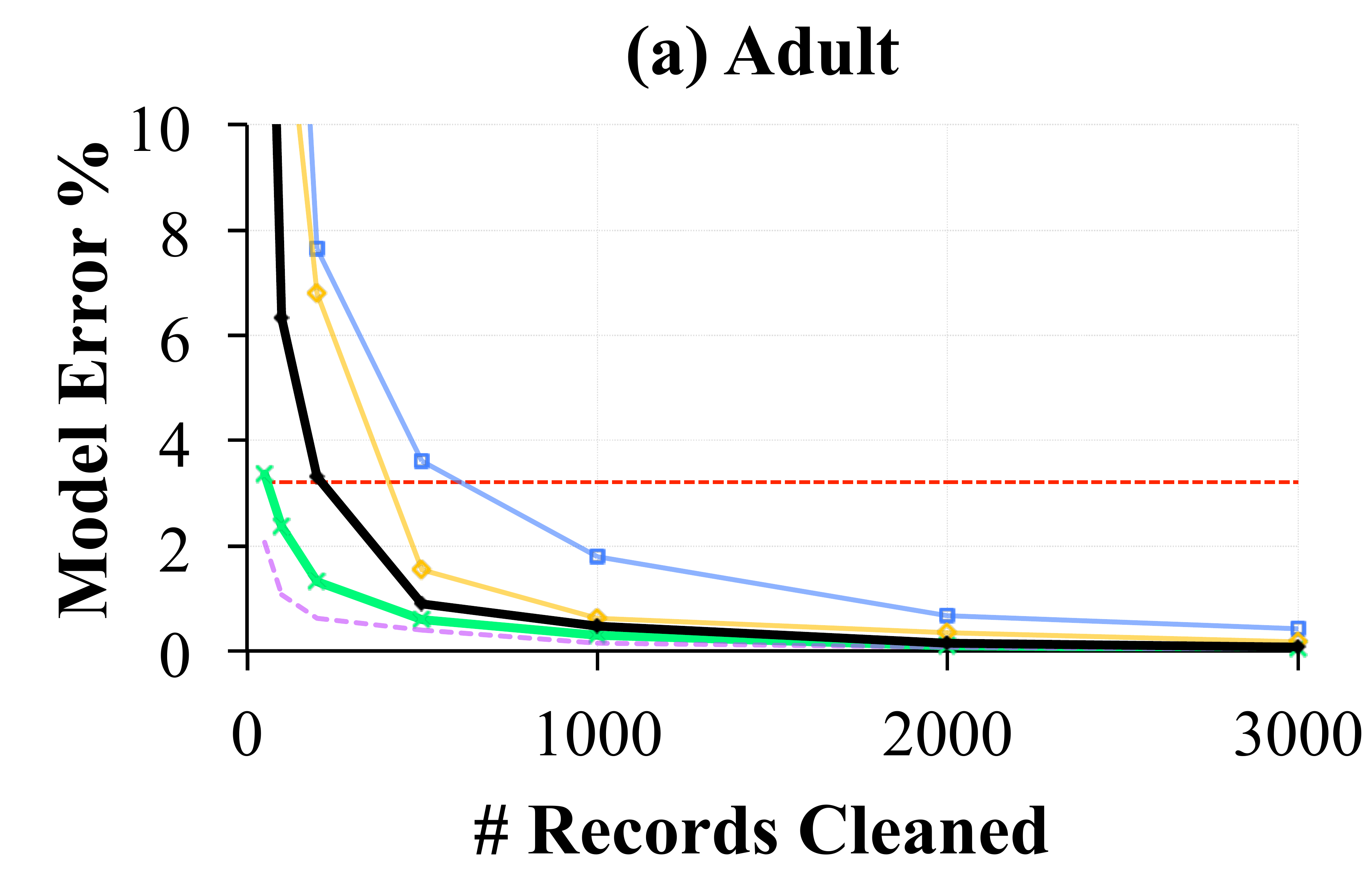}
 \includegraphics[width=0.49\columnwidth]{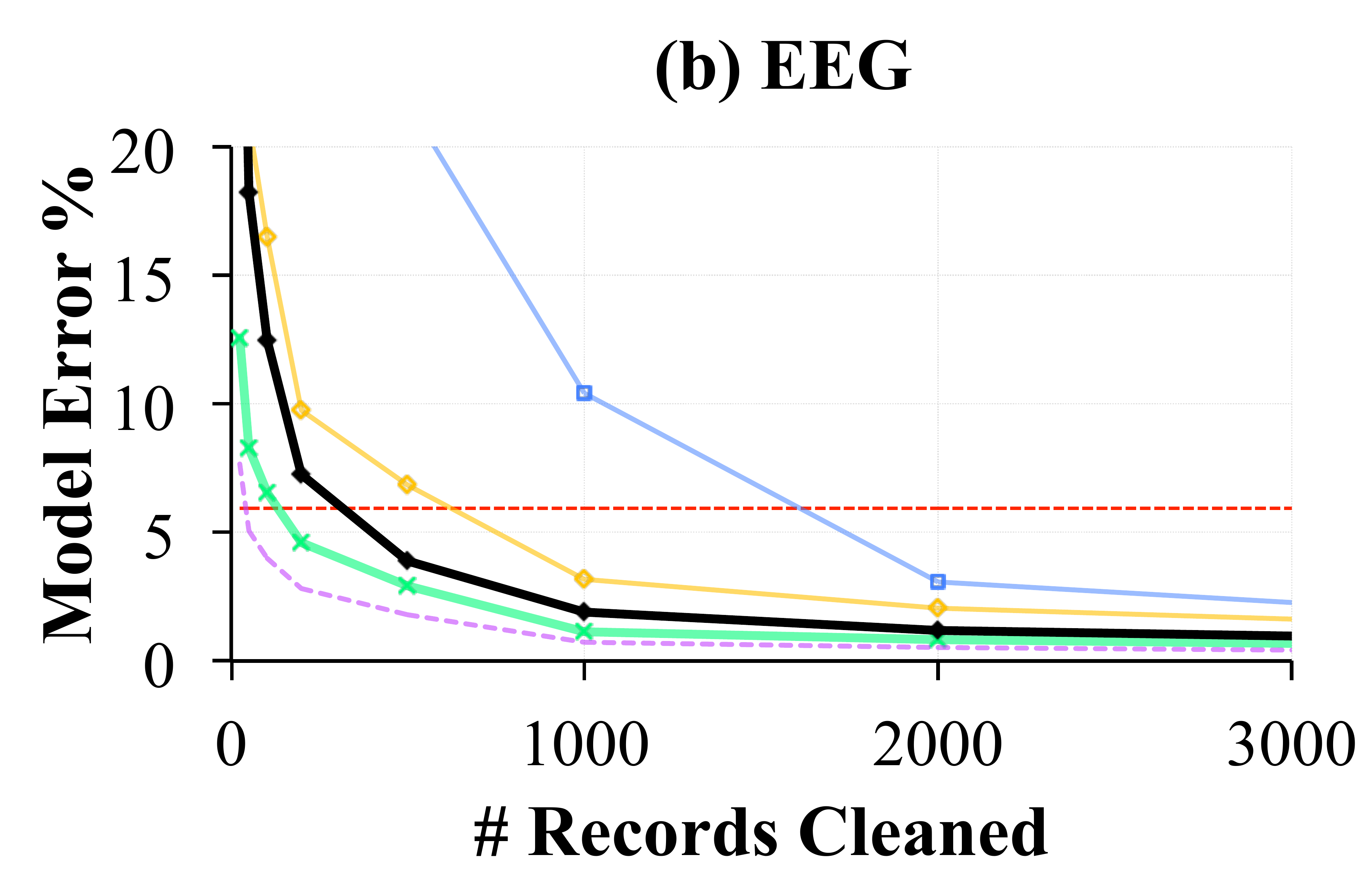}
 \includegraphics[width=0.49\columnwidth]{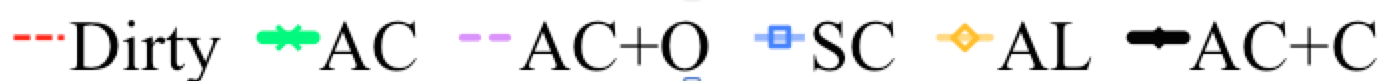}\vspace{-0.5em}
 \caption{Even with a classifier \sys converges faster than Active Learning and SampleClean. \label{pred-perf}}\vspace{-1.0em}
\end{figure}

\subsubsection{Classifiable Errors}
The adaptive case depends on being able to predict corrupted records.
For example, random corruption not correlated with any other data features may be hard to learn.
As corruption becomes more random, the classifier becomes increasingly erroneous.
The next experiment explores making the systematic corruption more random.
Instead of selecting the highest valued records for the most valuable features, we corrupt random records with probability $p$. 
We compare these results to AC-D where we do not have a detector at all at one vertical slice of the previous plot (cleaning 1000 records).
Figure \ref{tradeoffs2}a plots the relative error reduction using a classifier.
When the corruption is about 50\% random then there is a break even point where no detection is better.
The classifier is imperfect and misclassifies some data points incorrectly as cleaned.

\vspace{0.25em}

\noindent \emph{Summary: When errors are increasingly random (50\% random) and cannot be accurately classified, adaptive detection provides no benefit over no detection. }

\begin{figure}[ht!]
\centering
 \includegraphics[width=0.49\columnwidth]{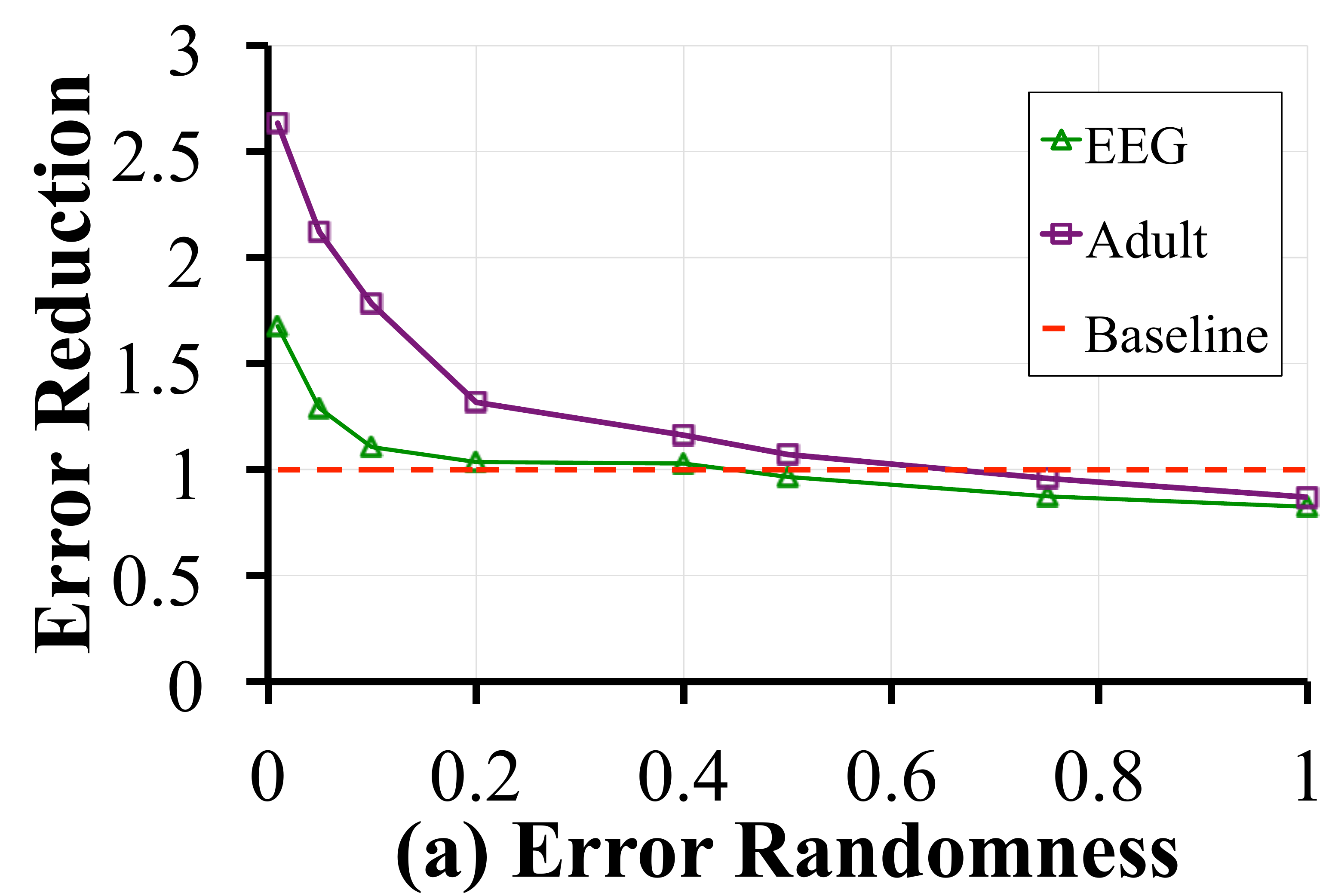}
 \includegraphics[width=0.49\columnwidth]{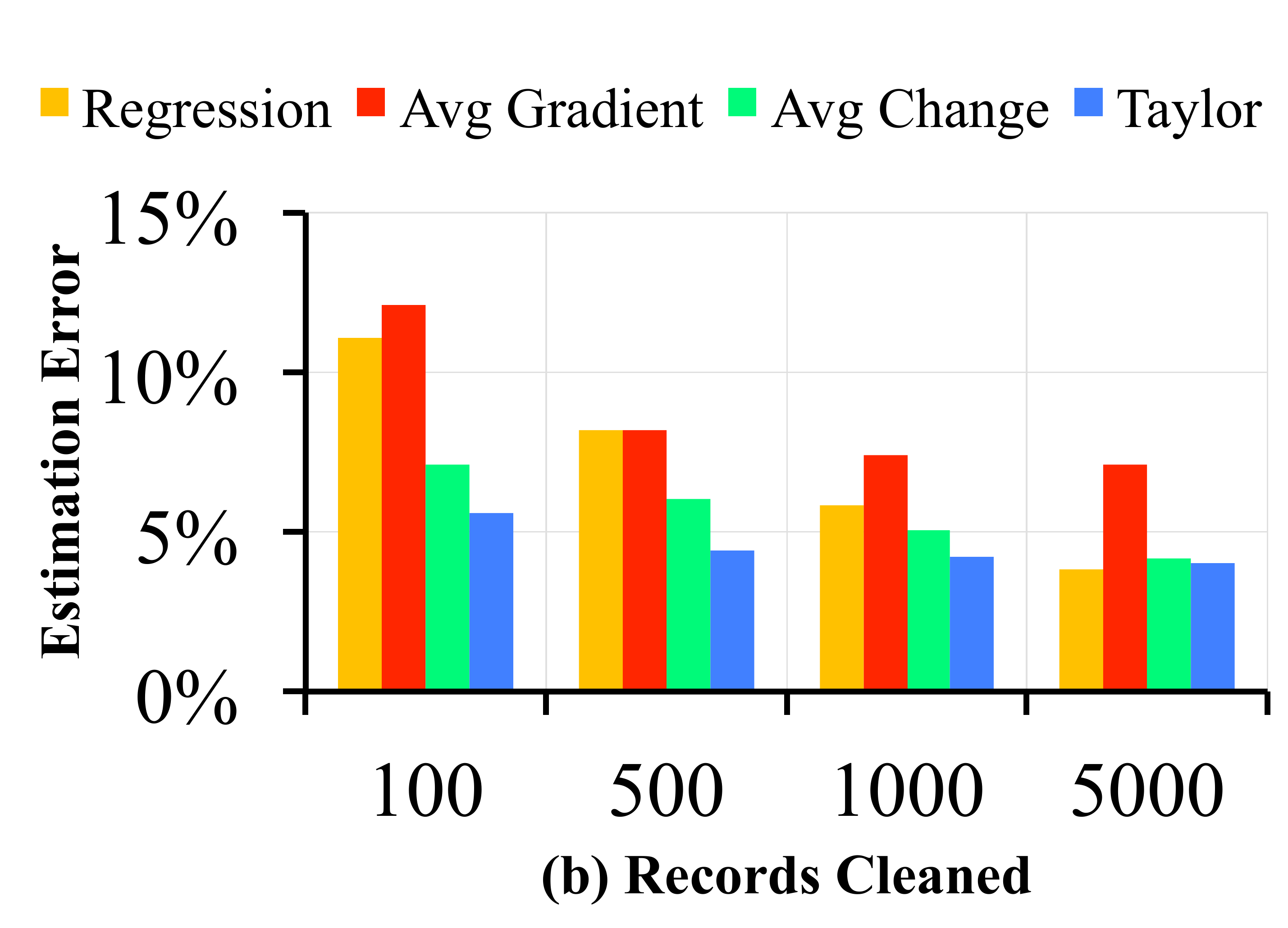}\vspace{-0.5em}
 \caption{(a) Data corruptions that are less random are easier to classify, and lead to more significant reductions in relative model error. (b) The Taylor series approximation gives more accurate estimates when the amount of cleaned data is small. \label{tradeoffs2}}
\end{figure}

\subsection{Estimation}\label{est}
The next experiment compares estimation techniques: (1) ``linear regression" trains a linear regression model that predicts the clean gradient as a function of the dirty gradient, (2) ``average gradient" which does not use the detection to inform how to apply the estimate, (3) ``average feature change" uses detection but no linearization, and (4) the Taylor series linear approximation.
Figure \ref{tradeoffs2}b measures how accurately each estimation technique estimates the gradient as a function of the number of cleaned records on the EEG dataset.

Estimation error is measured using the relative L2 error with the true gradient.
The Taylor series approximation proposed gives more accurate for small cleaning sizes.
Linear regression and the average feature change technique do eventually perform comparably but only after cleaning much more data.

\vspace{0.25em}

\noindent \emph{Summary: Linearized gradient estimates are more accurate when estimated from small samples. }

\subsection{Real World Scenarios}
The next set of experiments evaluate \sys in three real world scenarios, one demonstrating the \emph{a priori} case and the other two for the adaptive detection case.

\subsubsection{A Priori: Constraint Cleaning}\label{dfd-exp}
The first scenario explores the Dollars for Docs dataset published by ProPublica described throughout the paper.
To run this experiment, the entire dataset was cleaned up front, and simulated sampling from the dirty data and cleaning by looking up the value in the cleaned data (see Appendix \ref{dfd-errors} for constraints, errors, and cleaning methodology).
Figure \ref{dfd}a shows that \sys converges faster than Active Learning and SampleClean.
To achieve a 4\% relative error (i.e., a 75\% error reduction from the dirty model), \sys cleans 40000 fewer records than Active Learning.
Also, for 10000 records cleaned, \sys has nearly an order of magnitude smaller error than SampleClean.

Figure \ref{dfd}b shows the detection rate (fraction of disallowed research contributions identified) of the classifier as a function of the number of records cleaned. 
On the dirty data, we can only correctly classify 66\% of the suspected examples (88\% overall accuracy due to a class imbalance).
On the cleaned data, this classifier is nearly perfect with a 97\% true positive rate (98\% overall accuracy).
\sys converges to the cleaned accuracy faster than the alternatives with a classifier of 92\% true positive rate for only 10000 records cleaned.

\vspace{0.25em}

\noindent \emph{Summary: To achieve an 80\% detection rate, \sys cleans nearly 10x less records than Active Learning. }

\begin{figure}[t]
\centering\vspace{-1em}
 \includegraphics[width=0.49\columnwidth]{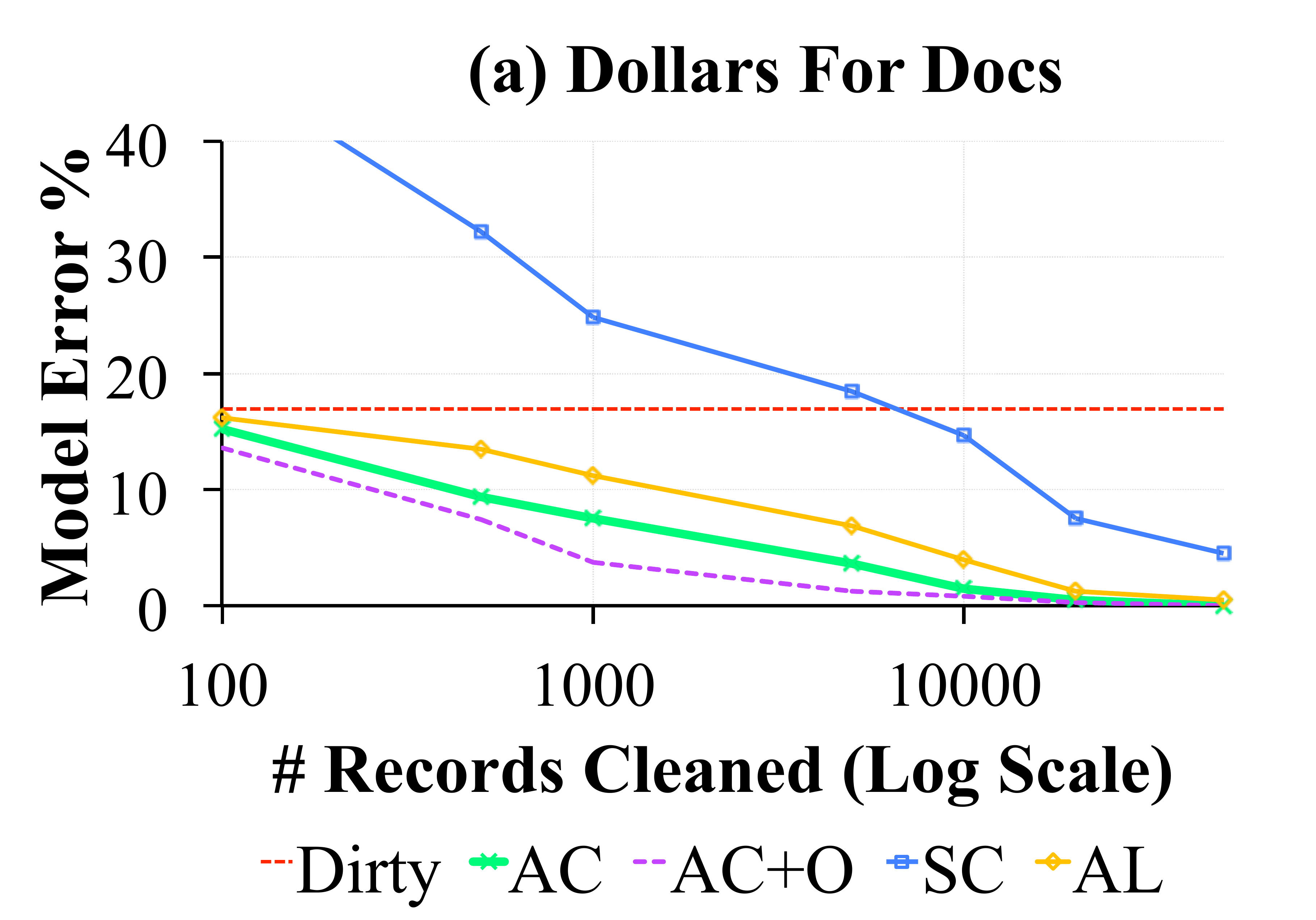}
 \includegraphics[width=0.49\columnwidth]{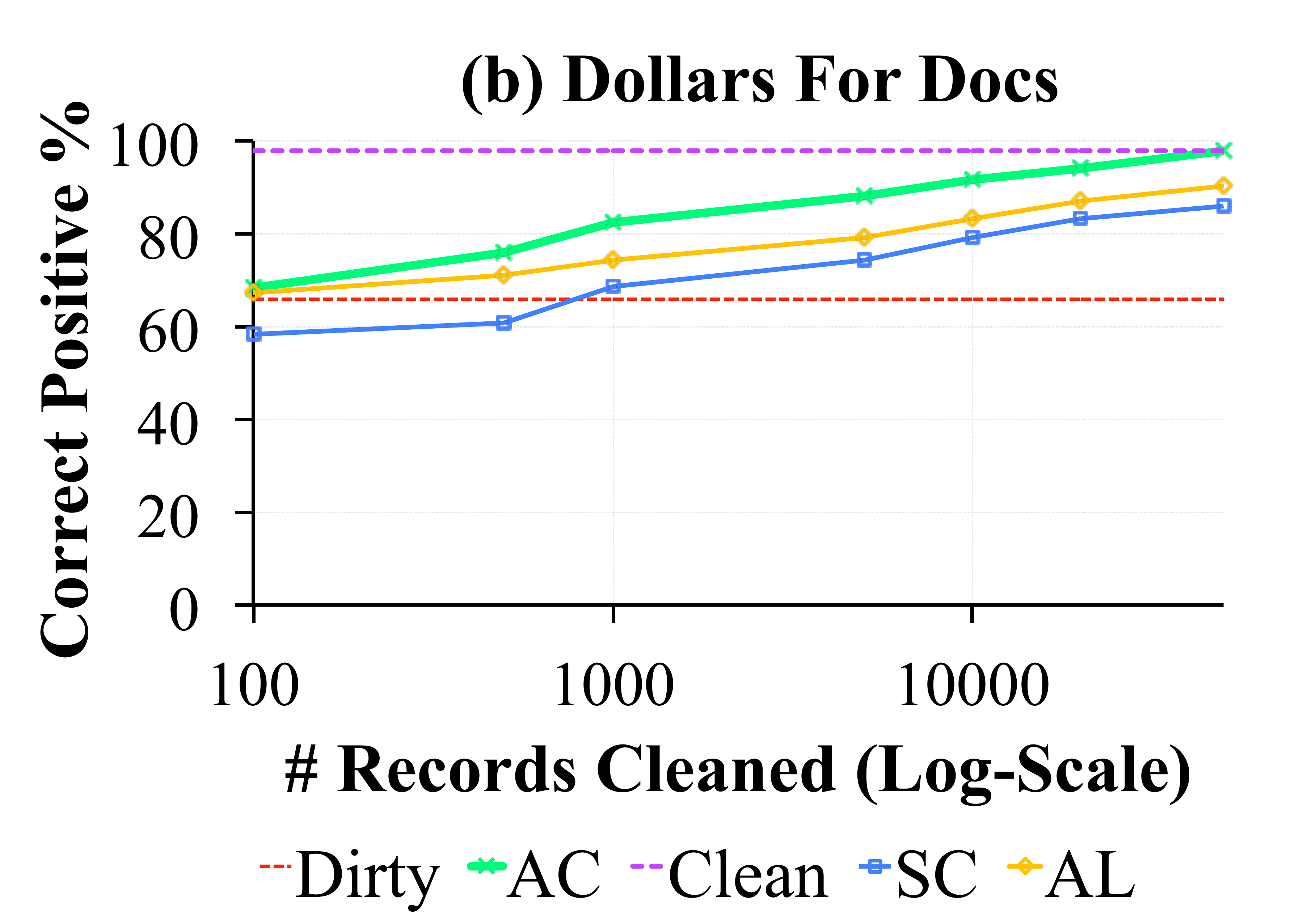}\vspace{-1em}
 \caption{(a) The relative model error as a function of the number of cleaned records. (b) The true positive rate as a function of the number of cleaned records. \label{dfd}}
\end{figure}

\subsubsection{Adaptive: Replacing Corrupted Data}
The next experiment explores the MNIST handwritten digit recognition dataset with a MATLAB image processing pipeline.
In this scenario, the analyst must inspect a potentially corrupted image and replace it with a higher quality one.
The MNIST dataset consists of 64x64 grayscale images.
There are two types of simulated corruptions: (1) 5x5 block removal where a random 5x5 block is removed from the image by setting its pixel values to 0, and (2) Fuzzy where a 4x4 moving average patch is applied over the entire image.
These corruptions are applied to a random 5\% of the images, and mimic the random (Fuzzy) vs. systematic corruption (5x5 removal) studied in the previous experiments.
The adaptive detector uses a 10 class classifier (one for each digit) to detect the corruption.

Figure \ref{mnist} shows that \sys makes more progress towards the clean model with a smaller number of examples cleaned.
To achieve a 2\% error for the block removal, \sys can inspect 2200 fewer images than Active Learning and 2750 fewer images than SampleClean.
For the fuzzy images, both Active Learning and \sys reach 2\% error after cleaning fewer than 100 images, while SampleClean requires 1750.

\vspace{0.25em}

\noindent \emph{Summary: In the MNIST dataset, \sys significantly reduces (more than 2x) the number of images to clean to train a model with 2\% error. }

\begin{figure}[t]
\centering
 \includegraphics[width=0.49\columnwidth]{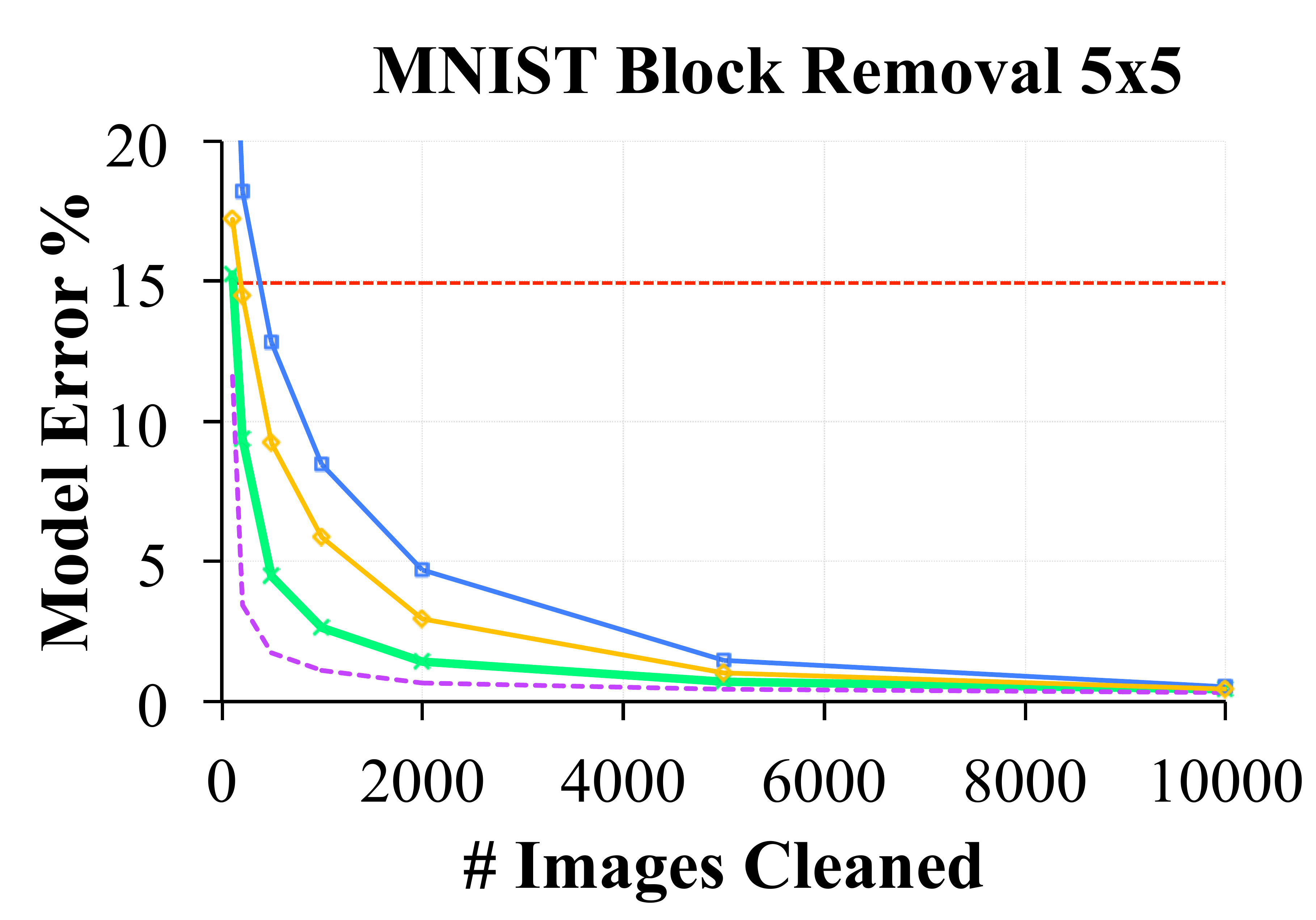}
 \includegraphics[width=0.49\columnwidth]{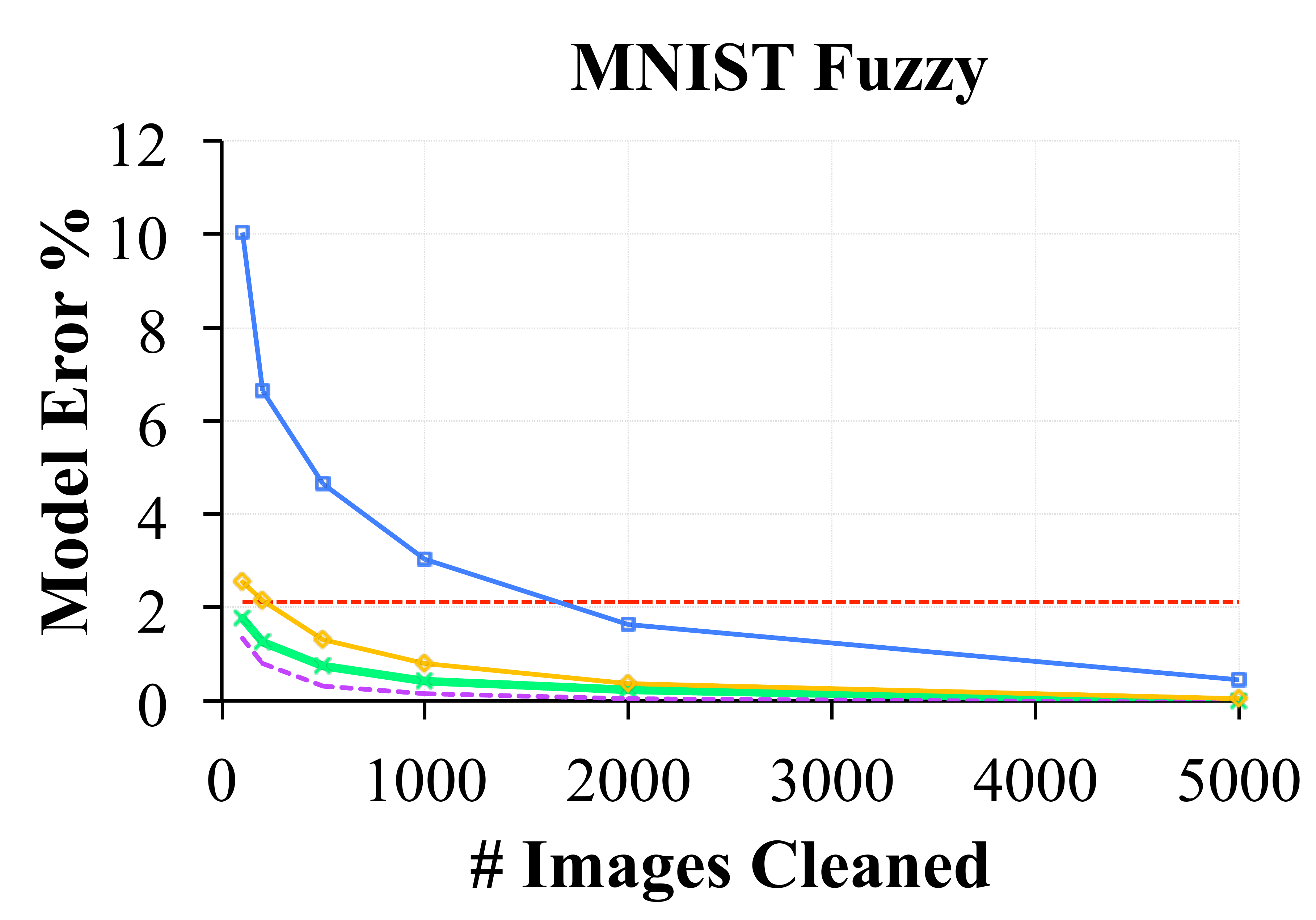}
 \includegraphics[width=0.49\columnwidth]{exp/legend-general.png}\vspace{-0.5em}
 \caption{In a real adaptive detection scenario with the MNIST dataset, \sys outperforms Active Learning and SampleClean.  \label{mnist}}\vspace{-1em}
\end{figure}

\subsubsection{Adaptive: Regression}
In the prior two experiments, we explored classification problems.
In this experiment, we consider the case when the convex model represents a linear regression model.
Regression models allow us to visualize what is happening when we apply \sys.
In Figure \ref{wb}, we illustrate regression model training on a small dataset of 193 countries collected from the World Bank.
Each country has an associated population and total dollar value of imports. 
We are interested in examining the relationship between these variables.
However, for some countries, the import values are out-of-date in the World Bank dataset. 
Up-to-date values are usually avaiable on national statistics websites and can be determined with some web searching.
It turns out that smaller countries were more likely to have out-of-date statistics in the World Bank dataset, and as a result, the trend line is misleading in the dirty data.
We applied \sys after verifying 30 out of the 193 countries (marked in yellow), and found that we could achieve a highly accurate approximation of the full result.

\begin{figure}[t]
\centering
 \includegraphics[width=0.49\columnwidth]{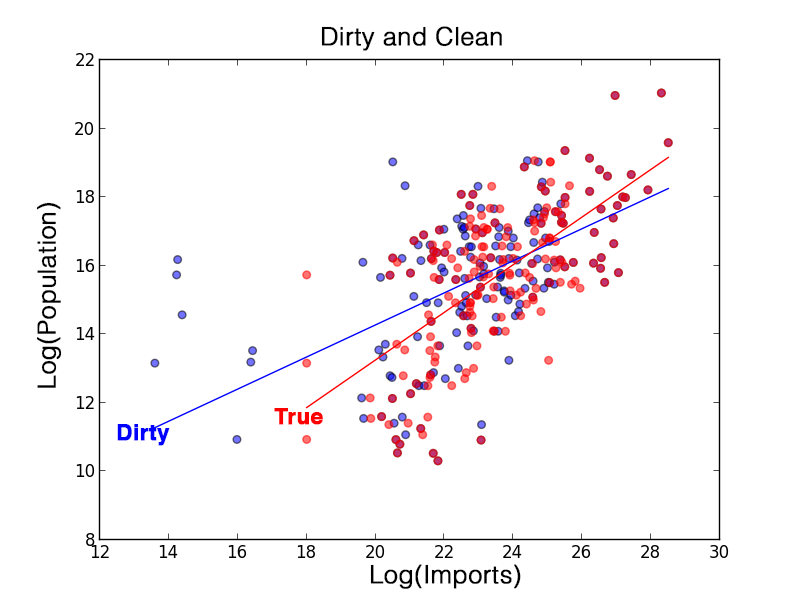}
 \includegraphics[width=0.49\columnwidth]{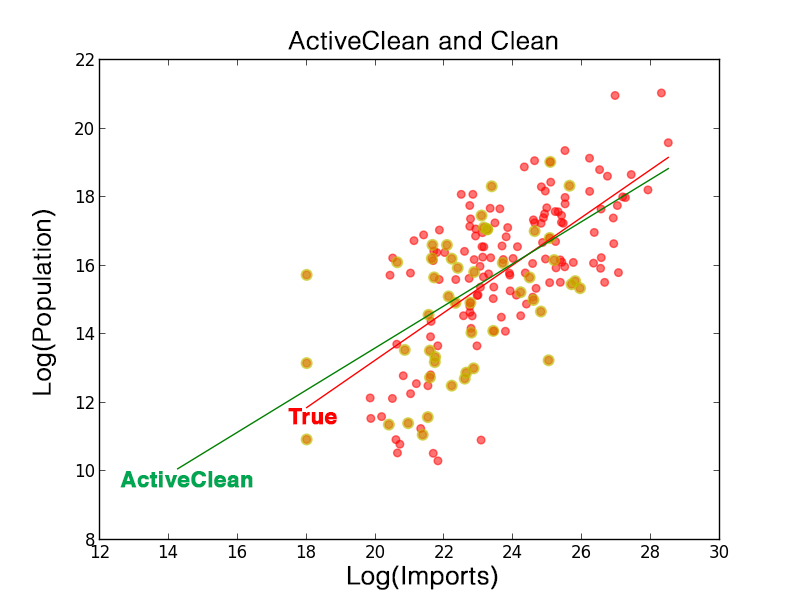}
 \caption{World Bank Data. We apply \sys to learn an accurate model predicting population from import values. The data has a systematic bias where small countries have out-of-date import values. \label{wb}}\vspace{-1em}
\end{figure}

\vspace{0.25em}

\noindent \emph{Summary: \sys is accurate even in regression analytics. }

\vspace{-1em}
\section{Related Work}\label{rw}
\noindent \textbf{Data Cleaning: } 
When data cleaning is expensive, it is desirable to apply it \textbf{progressively}, where analysts can inspect early results with only $k \ll N$ records cleaned.
Progressive data cleaning is a well studied problem especially in the context of entity resolution \cite{altowim2014progressive, whang2014incremental, papenbrock2015progressive, gruenheid2014incremental}.
Prior work has focused on the problem of designing data structures and algorithms to apply data cleaning progressively. which is challenging because many data cleaning algorithms require information from the entire relation.
Over the last 5 years a number of new results have expanded the scope and practicality of progressive data cleaning~\cite{mayfield2010eracer, DBLP:journals/pvldb/YakoutENOI11, yakout2013don}.
\sys studies the problem of prioritizing progressive cleaning by leveraging information about a user's subsequent use for the data.
Certain records, if cleaned, may be more likely to affect the downstream analysis.

There are a number of other works that use machine learning to improve the efficiency and/or reliability of data cleaning~\cite{DBLP:journals/pvldb/YakoutENOI11,yakout2013don,gokhale2014corleone}.
For example, Yakout et al. train a model that evaluates the likelihood of a proposed replacement value \cite{yakout2013don}.
Another application of machine learning is value imputation, where a missing value is predicted based on those records without missing values.
Machine learning is also increasingly applied to make automated repairs more reliable with human validation \cite{DBLP:journals/pvldb/YakoutENOI11}.
Human input is often expensive and impractical to apply to entire large datasets.
Machine learning can extrapolate rules from a small set of examples cleaned by a human (or humans) to uncleaned data \cite{gokhale2014corleone, DBLP:journals/pvldb/YakoutENOI11}.
This approach can be coupled with active learning \cite{DBLP:journals/pvldb/MozafariSFJM14} to learn an accurate model with the fewest possible number of examples.
While, in spirit, \sys is similar to these approaches, it addresses a very different problem of data cleaning before user-specified modeling.
The key new challenge in this problem is ensuring the correctness of the user's model after partial data cleaning.

SampleClean~\cite{wang1999sample} applies data cleaning to a sample of data, and estimates the results of aggregate queries.
Sampling has also been applied to estimate the number of duplicates in a relation \cite{heise2014estimating}. 
Similarly, Bergman et al. explore the problem of query-oriented data cleaning \cite{DBLP:conf/sigmod/BergmanMNT15}, where given a query, they clean data relevant to that query. 
Existing work does not explore cleaning driven by the downstream machine learning ``queries" studied in this work.
Deshpande et al. studied data acquisition in sensor networks \cite{deshpande2004model}. They explored value of information based prioritization of data acquisition for estimating aggregate queries of sensor readings.
Similarly, Jeffery et al. \cite{DBLP:conf/pervasive/JefferyAFHW06} explored similar prioritization based on value of information.
We see this work as pushing prioritization further down the pipeline to the end analytics.
Finally, incremental optimization methods like SGD have a connection to incremental materialized view maintenance as the argument for incremental maintenance over recomputation is similar (i.e., relatively sparse updates).
Krishnan et al. explored how samples of materialized views can be maintained similar to how models are updated with a sample of clean data in this work \cite{krishnan2015svc}.

\vspace{0.5em}

\noindent \textbf{Stochastic Optimization and Active Learning: } Zhao and Tong recently proposed using importance sampling in conjunction with stochastic gradient descent \cite{zhao2014stochastic}. 
The ideas applied in \sys are well rooted in the Machine Learning and Optimization literature, and we apply these ideas to the data cleaning problem. 
This line of work builds on prior results in linear algebra that show that some matrix columns are more informative than others \cite{drineas2012fast}, and Active Learning which shows that some labels are more informative that others \cite{settles2010active}.
Active Learning largely studies the problem of label acquisition \cite{settles2010active},
and recently the links between Active Learning and Stochastic optimization have been studied \cite{guillory2009active}. 
We use the work in Guillory et al. to evaluate a state-of-the-art Active Learning technique against \sys.

\vspace{0.5em}

\noindent \textbf{Transfer Learning and Bias Mitigation: }  
\sys has a strong link to a field called Transfer Learning and Domain Adaptation \cite{pan2010survey}. The basic idea of Transfer Learning is that suppose a model is trained on a dataset $D$ but tested on a dataset $D'$. 
Much of the complexity and contribution of \sys comes from efficiently tuning such a process for expensive data cleaning applications -- costs not studied in Transfer Learning.
In robotics, Mahler et al. explored a calibration problem in which data was systematically corrupted \cite{DBLP:conf/case/MahlerKLSMKPWFAG14} and proposed a rule-based technique for cleaning data.
Other problems in bias mitigation (e.g., Krishnan et al. \cite{DBLP:conf/recsys/KrishnanPFG14}) have the same structure, systematically corrupted data that is feeding into a model.
In this work, we try to generalize these principles given a general dirty dataset, convex model, and data cleaning procedure.

\vspace{0.5em}

\noindent \textbf{Secure Learning: } \sys is also related to work in adversarial learning \cite{nelson2012query}, where the goal is to make models robust to adversarial data manipulation.
This line of work has extensively studied methodologies for making models private to external queries and robust to malicious labels \cite{xiaofeature}, but the data cleaning problem explores more general corruptions than just malicious labels.
One widely applied technique in this field is reject-on-negative impact, which essentially, discards data that reduces the loss function--which will not work when we do not have access to the true loss function (only the ``dirty loss").

\section{Discussion and Future Work}
The experimental results suggest the following conclusions about \sys: (1) when the data corruption rate is relatively small (e.g., 5\%), \sys cleans fewer records than Active Learning or SampleClean to achieve the same model accuracy, (2) all of the optimizations in \sys (importance sampling, detection, and estimation) lead to significantly more accurate models at small sample sizes, (3) only when corruption rates are very severe (e.g. 50\%) , SampleClean outperforms \sys, and (4) two real-world scenarios demonstrate similar accuracy improvements where \sys returns significantly more accurate models than SampleClean or Active Learning for the same number of records cleaned.

There are also a few additional points for discussion.
\sys provides guarantees for training error on models trained with progressive data cleaning, however, there are no such guarantees on test error. 
This work focuses on the problem where an analyst has a large amount of dirty data and would like explore data cleaning and predictive models on this dataset.
By providing the analyst more accurate model estimates, the value of different data cleaning techniques can be judged without having to clean the entire dataset.
However, the exploratory analysis problem is distinct from the model deployment problem (i.e., serving predictions to users from the model), which we hope to explore in more detail in future work.
It implicitly assumes that when the model is deployed, it will be applied in a setting where the test data is also clean.
Training on clean data, and testing on dirty data, defeats the purpose of data cleaning and can lead to unreliable predictions.

As the experiments clearly show, \sys is not strictly \emph{better} than Active Learning or SampleClean.
\sys is optimized for a specific design point of sparse errors and small sample sizes, and the empirical results suggest it returns more accurate models in this setting.
As sample sizes and error rates increase, the benefits of \sys are reduced.
Another consideration for future work is automatically selecting alternative techniques when \sys is expected to perform poorly.

Beyond these limitations, there are several exciting new avenues for future work.
The data cleaning models explored in this work can be extended to handle non-uniform costs, where different errors have a different cleaning cost.
Next, the empirical success of Deep Learning has led to increasing industry and research adoption of non-convex losses in many tasks that were traditionally served by convex models.
In future work, we hope to explore how we can integrate with such frameworks.

\section{Conclusion}
The growing popularity of predictive models in data analytics adds additional challenges in managing dirty data.
Progressive data cleaning in this setting is susceptible to errors due to mixing dirty and clean data, sensitivity to sample size, and the sparsity of errors.
The key insight of \sys is that an important class of predictive models, called convex loss models (e.g., linear regression and SVMs), can be simultaneously trained and cleaned.
Consequently, there are provable guarantees on the convergence and error bounds of \sys.  
\sys also includes numerous optimizations such as: using the information from the model to inform data cleaning on samples, dirty data detection to avoid sampling clean data, and batching updates.
The experimental results are promising as they suggest that these optimizations can significantly reduce data cleaning costs when errors are sparse and cleaning budgets are small.
Techniques such as Active Learning and SampleClean are not optimized for the sparse low-budget setting, and \sys achieves models of similar accuracy for significantly less records cleaned.

\textbf{\small This research is supported in part by NSF CISE Expeditions Award CCF-1139158, LBNL Award 7076018, and DARPA XData Award FA8750-12-2-0331, and gifts from Amazon Web Services, Google, SAP, The Thomas and Stacey Siebel Foundation, Adatao, Adobe, Apple, Inc., Blue Goji, Bosch, C3Energy, Cisco, Cray, Cloudera, EMC2, Ericsson, Facebook, Guavus, HP, Huawei, Informatica, Intel, Microsoft, NetApp, Pivotal, Samsung, Schlumberger, Splunk, Virdata and VMware.}

\fontsize{7.5pt}{8.0pt} \selectfont
\bibliographystyle{abbrv}
\bibliography{ref} 
\normalsize \selectfont
\appendix
\section{Set-of-Records Cleaning Model}\label{set-of-r}
In paper, we formalized the analyst-specified data cleaning as follows.
We take the sample of the records $S_{dirty}$, and apply data cleaning $C(\cdot)$.
$C$ is applied to a record and produces the clean record:
\[
S_{clean} = \{C(r) : \forall r \in S_{dirty}\}
\]
The record-by-record cleaning model is a formalization of the costs of data cleaning where each record has the same cost to clean and this cost does not change throughout the entire cleaning session.
There are, however, some cases when cleaning the first record of a certain type of corruption is expensive but all subsequent records are cheaper.

\begin{example}\label{app-ex1}
In most spell checking systems, when a misspelling is identified, the system gives an option to fix all instances of that misspelling.
\end{example}

\begin{example}\label{app-ex2}
When an inconsistent value is identified all other records with the same inconsistency can be efficiently fixed.
\end{example}

This model of data cleaning can fit into our framework and we formalize it as the ``Set-of-Records" model as opposed to the ``Record-by-Record" model. 
In this model, the cleaning function $C(\cdot)$ is not restricted to updating only the records in the sample.
$C(\cdot)$ takes the entire dirty sample as an argument (that is the cleaning is a function of the sample), the dirty data, and updates the entire dirty data:
\[
R'_{dirty} = C(S_{dirty},R_{dirty})
\]
we require that for every record $s \in S_{dirty}$, that record is completely cleaned after applying $C(\cdot)$, giving us $S_{clean}$.
Records outside of $S_{dirty}$ may be cleaned on a subset of dirty attributes by $C(\cdot)$.
After each iteration, we re-run the detector, and move any $r \in R'_{dirty}$ that are clean to $R_{clean}$.
Such a model allows us to capture data cleaning operations such as in Example \ref{app-ex1} and Example \ref{app-ex2}.

\section{Stochastic Gradient Descent}\label{appsgd}

Stochastic Gradient Descent converges for a suitably chosen step size if the sample gradients are unbiased estimates of the full gradient. 
The first problem is to choose weights $\alpha$ and $\beta$ (to average already clean and newly cleaned data) such that the estimate of the gradient is unbiased. 
The batch $S_{dirty}$ is drawn only from $R_{dirty}$.
Since the sizes of $R_{dirty}$ and its complement are known, it follows that the gradient over the already clean data $g_C$ and the recently cleaned data $g_S$ can be combined as follows:
\[
g(\theta^{t}) = \frac{\mid R_{dirty} \mid \cdot g_S + \mid R_{clean} \mid \cdot g_C  }{\mid R \mid}
\]
Therefore,
\[
\alpha = \frac{\mid R_{clean} \mid}{\mid R \mid}, \beta = \frac{\mid R_{dirty} \mid}{\mid R \mid}
\]

\begin{lemma}
The gradient estimate $g(\theta)$ is unbiased if $g_S$ is an unbiased estimate of:
\[
\frac{1}{\mid R_{dirty} \mid} \sum g_i(\theta)
\]
\end{lemma}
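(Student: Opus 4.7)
The proof is essentially a linearity-of-expectation calculation once the gradient over the full (hypothetically cleaned) relation is decomposed according to the partition $R = R_{clean} \sqcup R_{dirty}$. My plan is to proceed in three short steps.

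First, I would write the target quantity $g^*(\theta)$, the gradient over the fully cleaned data, in decomposed form:
\[
g^*(\theta) \;=\; \frac{1}{|R|} \sum_{i \in R} \nabla\phi(x_i^{(c)},y_i^{(c)},\theta) \;=\; \frac{|R_{clean}|}{|R|}\,g_C(\theta) \;+\; \frac{|R_{dirty}|}{|R|}\cdot \frac{1}{|R_{dirty}|}\sum_{i \in R_{dirty}} \nabla\phi(x_i^{(c)},y_i^{(c)},\theta),
\]
using that $g_C(\theta)$ is defined as the exact (deterministic) mean gradient over the already-clean records, and that the inner sum on the right is the mean clean gradient over the dirty partition.

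Second, I would take expectation of the estimator $g(\theta)$ defined in the statement. Only $g_S$ is random (the sample $S$ is drawn from $R_{dirty}$); $g_C$, $|R_{clean}|$, $|R_{dirty}|$, and $|R|$ are all fixed at iteration $t$. By linearity of expectation,
\[
\mathbb{E}[g(\theta)] \;=\; \frac{|R_{clean}|}{|R|}\,g_C(\theta) \;+\; \frac{|R_{dirty}|}{|R|}\,\mathbb{E}[g_S(\theta)].
\]

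Third, I would invoke the hypothesis of the lemma, namely $\mathbb{E}[g_S(\theta)] = \frac{1}{|R_{dirty}|}\sum_{i \in R_{dirty}} \nabla\phi(x_i^{(c)},y_i^{(c)},\theta)$, and substitute into the display above. The right-hand side then matches the decomposition of $g^*(\theta)$ from the first step term-for-term, giving $\mathbb{E}[g(\theta)] = g^*(\theta)$, which is exactly the unbiasedness claim.

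There is essentially no hard step: the work is bookkeeping of the weights $\alpha = |R_{clean}|/|R|$ and $\beta = |R_{dirty}|/|R|$ so that the partition weights cancel correctly. The one subtlety worth flagging explicitly in the write-up is why $\mathbb{E}[g_S(\theta)]$ equals the mean \emph{clean} gradient over $R_{dirty}$ despite $S$ being sampled from the dirty records: this is precisely the role of the $1/p(i)$ Horvitz--Thompson reweighting built into the definition of $g_S$ together with the fact that $C(\cdot)$ is applied to each sampled record before the gradient is evaluated. Once that is noted, the lemma follows immediately.
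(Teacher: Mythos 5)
Your proof is correct and follows essentially the same route as the paper's own (sketched) argument: decompose the full clean gradient over the partition $R = R_{clean} \cup R_{dirty}$, note that only $g_S$ is random, and apply linearity of expectation together with the hypothesis that $g_S$ is unbiased for the mean clean gradient over $R_{dirty}$. Your write-up is in fact more careful than the paper's sketch, particularly in flagging the role of the $1/p(i)$ reweighting in making $\mathbb{E}[g_S(\theta)]$ equal the mean \emph{clean} gradient over the dirty partition.
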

\begin{proof}[Sketch]
\[
\mathbb{E}(\frac{1}{\mid R_{dirty} \mid} \sum g_i(\theta)) = \frac{1}{\mid R_{dirty} \mid} \cdot \mathbb{E}(\sum g_i(\theta)))
\]
By symmetry, 
\[
\mathbb{E}(\frac{1}{\mid R_{dirty} \mid} \sum g_i(\theta)) = g(\theta)
\]
\[
\mathbb{E}(\frac{1}{\mid R_{dirty} \mid} \sum g_i(\theta)) = \frac{\mid R_{dirty} \mid \cdot g_S + \mid R_{clean} \mid \cdot g_C  }{\mid R \mid}
\]
\end{proof}

The error bound discussed in Proposition 2 can be tightened for a class of models called strongly convex (see \cite{bertsekas2011incremental} for a defintion). 

\begin{proposition}
For a strongly convex loss, a batch size $b$, and $T$ iterations, the convergence rate is bounded by $O(\frac{\sigma^2}{bT})$. 
\end{proposition}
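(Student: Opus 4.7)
The plan is to follow the standard strongly convex SGD analysis, but making explicit how the batch size $b$ enters through the variance of the gradient estimate. The key inputs already provided elsewhere in the paper are unbiasedness of the update direction (Lemma in Appendix~\ref{appsgd}) and the per-sample gradient variance bound $\mathbb{E}\|g_i - g^*\|^2 \le \sigma^2$. What needs to be added is a strong convexity hypothesis: there exists $\mu > 0$ with
\[
\phi(\theta') \ge \phi(\theta) + \nabla\phi(\theta)^T(\theta'-\theta) + \tfrac{\mu}{2}\|\theta'-\theta\|^2,
\]
and a smoothness hypothesis (Lipschitz gradient with constant $L$), which is standard in this line of work.

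First I would reduce the batch update to a single noisy gradient step. Writing $g^{(t)}$ for the mixed estimate used by \sys at iteration $t$ and $g^*(\theta^{(t)})$ for the true full-data gradient, the unbiasedness lemma gives $\mathbb{E}[g^{(t)} \mid \theta^{(t)}] = g^*(\theta^{(t)})$. Because the $b$ sample gradients averaged in $g_S$ are conditionally independent (given $\theta^{(t)}$) and each has variance bounded by $\sigma^2$, the conditional variance of the mini-batch estimate satisfies
\[
\mathbb{E}\bigl[\|g^{(t)} - g^*(\theta^{(t)})\|^2 \,\big|\, \theta^{(t)}\bigr] \le \frac{\sigma^2}{b}.
\]
This is the one place the batch size enters the analysis, and it is where the $1/b$ factor in the final bound originates.

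Next I would run the standard one-step descent inequality. Let $\theta^{(c)}$ denote the clean optimum and $\Delta_t = \mathbb{E}\|\theta^{(t)} - \theta^{(c)}\|^2$. Expanding $\|\theta^{(t+1)} - \theta^{(c)}\|^2 = \|\theta^{(t)} - \gamma_t g^{(t)} - \theta^{(c)}\|^2$, taking conditional expectation, applying unbiasedness to cancel the cross term with $g^*$, using strong convexity to lower bound $\langle g^*(\theta^{(t)}), \theta^{(t)} - \theta^{(c)}\rangle \ge \mu \|\theta^{(t)} - \theta^{(c)}\|^2$ (optimality of $\theta^{(c)}$ gives $g^*(\theta^{(c)}) = 0$), and bounding the second-order term with the variance inequality above, yields a recursion of the form
\[
\Delta_{t+1} \le (1 - 2\mu\gamma_t + L^2\gamma_t^2)\,\Delta_t + \frac{\gamma_t^2 \sigma^2}{b}.
\]
Then I would instantiate the step size in one of the two standard ways: either a decreasing schedule $\gamma_t = c/(\mu t)$, which unrolls the recursion by induction to give $\Delta_T = O(\sigma^2/(\mu^2 b T))$, or the constant step size $\gamma = O(\mu/L^2)$ run for $T$ iterations which yields an analogous bound up to constants. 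Either way the resulting rate is $O(\sigma^2/(bT))$ as claimed.

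The main obstacle is actually bookkeeping rather than anything deep: one has to be careful that the ``mixed'' gradient $g^{(t)} = \alpha g_C(\theta^{(t)}) + \beta g_S(\theta^{(t)})$ has the variance reduction by $b$ applied only to the $\beta g_S$ term (the $g_C$ term is deterministic given $\theta^{(t)}$), and that the weights $\alpha = |R_{clean}|/|R|$, $\beta = |R_{dirty}|/|R|$ chosen in Appendix~\ref{appsgd} are exactly what make the mixed estimate unbiased for the full clean gradient $g^*$. Once that accounting is done, this becomes the classical strongly convex mini-batch SGD rate of Dekel et al.~\cite{dekel2012optimal}, so the final step is just to cite that result for the constants and tighten the log-factors if needed.
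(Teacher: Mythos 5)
Your proof is correct, and it is in fact more than the paper provides: the paper states this proposition with no proof at all, only a pointer to the standard strongly convex SGD literature (Bertsekas, Dekel et al.), so your sketch is essentially a reconstruction of the argument those citations contain. The two load-bearing steps you identify are exactly the right ones: (i) the mixed estimate $\alpha g_C + \beta g_S$ is unbiased for the full clean gradient (the paper's Lemma in Appendix~\ref{appsgd}) and its conditional variance is $O(\sigma^2/b)$ because only the $g_S$ term is stochastic and it averages $b$ independent per-sample gradients; (ii) the one-step recursion $\Delta_{t+1} \le (1 - 2\mu\gamma_t + L^2\gamma_t^2)\Delta_t + \gamma_t^2\sigma^2/b$ unrolled with $\gamma_t \propto 1/(\mu t)$ gives $O(\sigma^2/(\mu^2 bT))$. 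One honest caveat worth making explicit if this were written out in full: your recursion needs a smoothness (Lipschitz-gradient) hypothesis in addition to strong convexity, which the paper never states; without it one must instead assume a uniform second-moment bound on the gradients, which changes what $\sigma^2$ means. Also note that the paper's definition of $\sigma^2$ in Section~\ref{sgd} is the variance of the per-iteration estimate $\mathbb{E}\|g - g^*\|^2$ rather than the per-sample variance, so for the stated bound to carry an explicit $1/b$ factor one must read $\sigma^2$ as the per-sample quantity, as you do; this is an inconsistency in the paper, not in your argument.
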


\section{Non-convex losses}\label{non-convex}
We acknowledge that there is an increasing popularity of non-convex losses in the Neural Network and Deep Learning literature. 
However, even for these losses, gradient descent techniques still apply. 
Instead of converging to a global optimum they converge to a locally optimal value. 
Likewise, \sys will converge to the closest locally optimal value to the dirty model. 
Because of this, it is harder to reason about the results.
Different initializations will lead to different local optima, and thus, introduces a complex dependence on the initialization with the dirty model.
This problem is not fundemental to \sys and any gradient technique suffers this challenge for general non-convex losses, and we hope to explore this more in the future.

\section{Importance Sampling}\label{impsample-deriv}
This lemma describes the optimal distribution over a set of scalars:
\begin{lemma}\label{impsample}
Given a set of real numbers $A = \{a_1,...,a_n\}$, let $\hat{A}$ be 
a sample with replacement of $A$ of size k.
If $\mu$ is the mean $\hat{A}$, the sampling distribution that minimizes
 the variance of $\mu$, i.e., the expected square error, is $p(a_i) \propto a_i$.
\end{lemma}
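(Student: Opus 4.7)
The plan is to reduce the claim to a standard variance-minimization problem for an importance-weighted mean estimator. The first step is to make the estimator well-defined: since samples are drawn non-uniformly with probabilities $p(a_i)$, the naive sample mean is biased, so I would use the importance-weighted Horvitz--Thompson style estimator
\[
\hat{\mu} \;=\; \frac{1}{k}\sum_{j=1}^{k}\frac{a_{I_j}}{n\,p(a_{I_j})},
\]
where $I_1,\ldots,I_k$ are i.i.d.\ draws from the distribution $p$. A one-line calculation shows $\mathbb{E}[\hat{\mu}] = \frac{1}{n}\sum_i a_i$, so the estimator is unbiased for any strictly positive $p$, and the question reduces purely to minimizing variance.

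Next I would write out the variance of a single term. Because the $k$ samples are i.i.d., $\mathrm{Var}(\hat\mu) = \frac{1}{k}\mathrm{Var}(Y)$ where $Y = a_I/(n\,p(a_I))$. Computing the second moment directly gives
\[
\mathbb{E}[Y^2] \;=\; \sum_i p(a_i)\cdot\frac{a_i^2}{n^2\,p(a_i)^2} \;=\; \frac{1}{n^2}\sum_i \frac{a_i^2}{p(a_i)},
\]
and subtracting the constant $\mathbb{E}[Y]^2$ reduces the minimization to: minimize $\sum_i a_i^2/p(a_i)$ subject to $\sum_i p(a_i) = 1$ and $p(a_i) > 0$.

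For the final step I would invoke Cauchy--Schwarz in the form
\[
\left(\sum_i |a_i|\right)^2 \;=\; \left(\sum_i \frac{|a_i|}{\sqrt{p(a_i)}}\cdot\sqrt{p(a_i)}\right)^2 \;\le\; \left(\sum_i \frac{a_i^2}{p(a_i)}\right)\left(\sum_i p(a_i)\right),
\]
which gives a lower bound of $(\sum_i |a_i|)^2$ on the quantity to be minimized, with equality iff $|a_i|/\sqrt{p(a_i)} \propto \sqrt{p(a_i)}$, i.e.\ $p(a_i) \propto |a_i|$. Alternatively, a Lagrange multiplier argument on $\sum_i a_i^2/p(a_i) + \lambda(\sum_i p(a_i) - 1)$ yields the stationarity condition $-a_i^2/p(a_i)^2 + \lambda = 0$, giving the same answer.

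The only real obstacle is cosmetic: the lemma as stated writes $p(a_i) \propto a_i$, but the argument above produces $p(a_i) \propto |a_i|$, which agrees with the stated form precisely when the $a_i$ are non-negative (the intended application is to gradient norms, which are non-negative, so this is consistent with the usage in Section~\ref{dist-samp}). I would note this convention explicitly and then the Cauchy--Schwarz step is immediate.
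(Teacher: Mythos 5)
Your proposal is correct, and its core reduction is the same as the paper's: both arrive at minimizing $\sum_i a_i^2/p_i$ over the probability simplex after observing that the estimator is unbiased so only the second moment matters. The differences are in the framing and the closing step. First, you are more careful at the outset: you explicitly define the importance-weighted estimator $\hat\mu = \frac{1}{k}\sum_j a_{I_j}/(n\,p(a_{I_j}))$ and verify unbiasedness, whereas the paper's appendix simply calls $\mu$ ``the mean of $\hat A$'' and asserts $\mathbb{E}(\mu^2) = \frac{1}{n^2}\sum_i a_i^2/p_i$, a formula that only holds for the reweighted estimator; your version makes the lemma actually well-posed. Second, the paper finishes with Lagrange multipliers (its stationarity condition $-a_i^2/(2p_i^2)+\lambda=0$ even carries a spurious factor of $\tfrac12$, though this does not affect the conclusion), which only certifies a critical point; your Cauchy--Schwarz argument gives the global lower bound $\bigl(\sum_i |a_i|\bigr)^2$ with an explicit equality condition, so it buys a genuinely complete optimality proof rather than a first-order condition. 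You also mention the Lagrange route as an alternative, which is exactly the paper's argument. Finally, your remark that the correct answer is $p(a_i)\propto |a_i|$ rather than $a_i$ matches the paper's own final formula $p_i = |a_i|/\sum_i|a_i|$ and is consistent with the intended application to gradient norms.
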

Lemma \ref{impsample} shows that when estimating a mean of numbers with sampling, the distribution with optimal variance is sampling proportionally to the values.

The variance of this estimate is given by:
\[
Var(\mu) = \mathbb{E}(\mu^2)-\mathbb{E}(\mu)^2
\] 
Since the estimate is unbiased, we can replace $\mathbb{E}(\mu)$ with the average of $A$:
\[
Var(\mu) = \mathbb{E}(\mu^2)-\bar{A}^2
\]
Since $\bar{A}$ is deterministic, we can remove that term during minimization.
Furthermore, we can write $\mathbb{E}(\mu^2)$ as:
\[
\mathbb{E}(\mu^2) = \frac{1}{n^2}\sum_i^n \frac{a_i^2}{p_i}
\]
Then, we can solve the following optimization problem (removing the proportionality of $\frac{1}{n^2}$) over the set of weights $P=\{p(a_i)\}$:
\[
\min_{P} \sum_i^N \frac{a_i^2}{p_i}
\]
\[
\text{subject to: } P > 0, \sum P = 1
\]
Applying Lagrange multipliers, an equivalent unconstrained optimization problem is:
\[
\min_{P > 0,\lambda > 0} \sum_i^N \frac{a_i^2}{p_i} + \lambda \cdot (\sum P - 1)
\]
If, we take the derivatives with respect to $p_i$ and set them equal to zero:
\[
-\frac{a_i^2}{2 \cdot p_i^2} + \lambda = 0
\]
If, we take the derivative with respect to $\lambda$ and set it equal to zero:
\[
\sum P - 1
\]
Solving the system of equations, we get:
\[
p_i = \frac{\mid a_i \mid }{\sum_i \mid a_i \mid}
\]

\section{Linearization}\label{apptaylor}
If $d$ is the dirty value and $c$ is the clean value, the Taylor series approximation for a function $f$ is given as follows:
\[
f(c) = f(d) + f'(d)\cdot(d-c) + ...
\]
Ignoring the higher order terms, the linear term $f'(d)\cdot(d-c)$ is a linear function in each feature and label.
We only have to know the change in each feature to estimate the change in value.
In our case the function $f$ is the gradient $\nabla\phi$.
So, the resulting linearization is:
\[
\nabla\phi(x^{(c)}_i,y^{(c)}_i,\theta) \approx \nabla\phi(x,y,\theta) + \frac{\partial}{\partial X}\nabla\phi(x,y,\theta)\cdot (x - x^{(c)}) \]
\[+ \frac{\partial}{\partial Y}\phi(x,y,\theta)\cdot (y - y^{(c)})
\]
When we take the expected value:
\[
\mathbb{E}(\nabla\phi(x_{clean},y_{clean},\theta)) \approx \nabla\phi(x,y,\theta) + \frac{\partial}{\partial X}\nabla\phi(x,y,\theta)\cdot \mathbb{E}(\Delta x) \]
\[+ \frac{\partial}{\partial Y}\nabla\phi(x,y,\theta)\cdot \mathbb{E}(\Delta y)
\]
It follows that:
\[
\approx \nabla\phi(x,y,\theta) + M_x \cdot \mathbb{E}(\Delta x) + M_y \cdot \mathbb{E}(\Delta y)
\]
where $M_x = \frac{\partial}{\partial X}\nabla\phi$ and $M_y = \frac{\partial}{\partial Y}\nabla\phi$.
Recall that the feature space is $d$ dimensional and label space is $l$ dimensional.
Then, $M_x$ is an $d \times d$ matrix, and $M_y$ is a $d \times l$ matrix.
Both of these matrices are computed for each record.
$\Delta x$ is a $d$ dimensional vector where each component represents a change in that feature and $\Delta y$ is an $l$ dimensional vector that represents the change in each of the labels. 

This linearization allows \sys to maintain per feature (or label) average changes and use these changes to center the optimal sampling distribution around the expected clean value.
To estimate $\mathbb{E}(\Delta x)$ and $\mathbb{E}(\Delta y)$, consider the following for a single feature $i$:
If we average all $j=\{1,...,K\}$ records cleaned that have an error for that feature, weighted by their sampling probability:
\[
\bar{\Delta}_{xi} = \frac{1}{NK}\sum_{j=1}^K (x^{(d)}[i]-x^{(c)}[i])\times \frac{1}{p(j)}
\]
Similarly, for a label $i$:
\[
\bar{\Delta}_{yi} = \frac{1}{NK}\sum_{j=1}^K (y^{(d)}[i]-y^{(c)}[i])\times \frac{1}{p(j)}
\]

Each $\bar{\Delta}_{xi}$ and $\bar{\Delta}_{yi}$ represents an average change in a single feature.
A single vector can represent the necessary changes to apply to a record $r$:
For a record $r$, the set of corrupted features is $f_r,l_r$.
Then, each record $r$ has a d-dimensional vector $\Delta_{rx}$ which is constructed as follows:
\[
 \Delta_{rx}[i] = \begin{cases} 0 & i \notin f_r \\ 
\bar{\Delta}_{xi} & i \in f_r
\end{cases} 
\]
Each record $r$ also has an l-dimensional vector $\Delta_{ry}$ which is constructed as follows:
\[
 \Delta_{rx}[i] = \begin{cases} 0 & i \notin l_r \\ 
\bar{\Delta}_{yi} & i \in l_r
\end{cases} 
\]
Finally, the result is: 
\[p_{r}\propto\|\nabla\phi(x,y,\theta^{(t)}) + M_x \cdot \Delta_{rx} +  M_y \cdot \Delta_{ry}\|
\]

\section{Example $M_x$, $M_y$}\label{example-deriv}
\noindent\textbf{Linear Regression: }
\[
\nabla\phi(x,y,\theta) = (\theta^Tx - y)x
\]
For a record, $r$, suppose we have a feature vector $x$.
If we take the partial derivatives with respect to x, $M_x$ is:
\[
M_x[i,i] = 2x[i] + \sum_{i \ne j} \theta[j]x[j] - y 
\]
\[
M_x[i,j] = \theta[j]x[i]
\]
Similarly $M_y$ is:
\[
M_y[i,1] = x[i] 
\]

\vspace{0.5em}

\noindent\textbf{Logistic Regression: } 
\[
\nabla\phi(x,y,\theta) = (h(\theta^Tx) - y)x
\]
where
\[
h(z) = \frac{1}{1+e^{-z}}
\]
we can rewrite this as:
\[
h_{\theta}(x) = \frac{1}{1+e^{\theta^Tx}}
\]
\[
\nabla\phi(x,y,\theta) = (h_{\theta}(x) - y)x
\]
In component form,
\[
g = \nabla\phi(x,y,\theta)
\]
\[
g[i] = h_{\theta}(x)\cdot x[i] - yx[i]
\]
Therefore,
\[
M_x[i,i] = h_{\theta}(x)\cdot(1- h_{\theta}(x))\cdot \theta[i] x[i] + h_{\theta}(x) - y
\]
\[
M_x[i,j] = h_{\theta}(x)\cdot(1- h_{\theta}(x))\cdot \theta[j] x[i] + h_{\theta}(x)
\]
\[
M_y[i,1] = x[i] 
\]

\noindent\textbf{SVM: } 
\[
\nabla\phi(x,y,\theta) =
\begin{cases}      
-y\cdot\boldsymbol{x} \text{ if } y\cdot\boldsymbol{x}\cdot\theta \le 1 \\
0\ \text{ if } y\ \boldsymbol{x}\cdot\theta \geq 1      
\end{cases}
\]
Therefore,
\[
M_x[i,i] = \begin{cases}      
-y[i] \text{ if } y\cdot\boldsymbol{x}\cdot\theta \le 1 \\
0\ \text{ if } y\ \boldsymbol{x}\cdot\theta \geq 1      
\end{cases} 
\]
\[
M_x[i,j] = 0
\]
\[
M_y[i,1] = x[i] 
\]

\section{Aggregate Queries as \\ Convex Losses}
\subsection{AVG and SUM queries}
\avgfunc, \sumfunc queries are a special case of the convex loss minimization discussed in the paper:
If we define the following loss, it is easy to verify the the optimal $\theta$ is the mean $\mu$:
\[
\phi = (x_{i} - \theta)^2
\]
with the appropriate scaling it can support $\avgfunc$, $\sumfunc$ queries with and without predicates.
Taking the gradient of that loss:
\[
\nabla\phi = 2(x_{i} - \theta)
\]
It is also easy to verify that the bound on errors is $O(\frac{\mathbb{E}((x-\mu)^2}{bT})$, which is essentially the CLT.
The importance sampling results are inutitive as well.
Applying the linearization:
\[
M_x = 2
\]
The importance sampling prioritizes points that it expects to be far away from the mean.

\subsection{MEDIAN}
Similarly, we can analyze the \medfunc query.
If we define the following loss, it is easy to verify the the optimal $\theta$ is the median $m$:
\[
\phi = \mid x_{i} - \theta\mid
\]
Taking the gradient of that loss:
\[
\nabla\phi = \text{1 if < m, -1 if > m}
\]
Applying the linearization:
\[
M_x = 0
\]
The intuitive result is that a robust query like a median does not need to consider estimation as the query result is robust to small changes.

\section{Experimental Comparison}
\subsection{Robust Logistic Regression}\label{rlogit}
We use the algorithm from Feng et al. for robust logistic regression.
\begin{enumerate}
\item Input: Contaminated training samples $\{(x_1, y_1), . . . ,(x_{n}
, y_{n})\}$ an upper bound on the number of outliers n, number of inliers n and sample dimension p.
\item Initialization: Set \[T = 4\sqrt{\log p/n + \log n/n}\]
\item Remove samples $(xi
, yi)$ whose magnitude satisfies $\|x_i\| \ge T$.
\item Solve regularized logistic regression problem.
\end{enumerate}

\section{Dollars For Docs Setup}\label{dfd-errors}
The dollars for docs dataset has the following schema:
\begin{lstlisting}[mathescape,basicstyle={\scriptsize}]
Contribution(pi_speciality$\textrm{,}$ drug_name$\textrm{,}$ device_name$\textrm{,}$
corporation$\textrm{,}$ amount$\textrm{,}$ dispute$\textrm{,}$ status)
\end{lstlisting}
To flag suspect donations, we used the \texttt{status} attribute.
When the \texttt{status} was ``covered" that means it was an allowed contribution under the researcher's declared protocol.
When the \texttt{status} was ``non-covered" that means it was a disallowed contribution under the researcher's declared protocol.
The rest of the textual attributes were featurized with a bag-of-words model, and the numerical amount and dispute attributes were treated as numbers.

We cleaned the entire Dollars for Docs dataset upfront to be able to evaluate how different budgeted data cleaning strategies compare to cleaning the full data.
To clean the dataset, we loaded the entire data 240089 records into Microsoft Excel. We identified four broad classes of errors:
\vspace{0.25em}

\noindent \textbf{Corporations are inconsistently represented: } ``Pfizer", ``Pfizer Inc.", ``Pfizer Incorporated".

\vspace{0.25em}

\noindent \textbf{Drugs are inconsistently represented: } ``TAXOTERE  DOCETAXEL -PROSTATE CANCER" and ``TAXOTERE"

\vspace{0.25em}

\noindent \textbf{Label of covered and not covered are not consistent: } ``No", ``Yes",``N", ``This study is not supported", ``None", ``Combination"

\vspace{0.25em} 

\noindent \textbf{Research subject must be a drug OR a medical device and not both: } ``BIO FLU QPAN H7N9AS03 Vaccine" and ``BIO FLU QPAN H7N9AS03 Device"

\vspace{0.5em} 

To fix these errors, we sorted by each column and merged values that looked similar and removed inconsistencies as in the status labels. 
When there were ambiguities, we refered to the drug company's website and whitepapers.
When possible, we used batch data transformations, like find and replace (i.e. the Set-of-Records model).
In all, 44234 records had some error and full data cleaning required about 2 days of efforts.

Once cleaned, in our experiment, we encoded the 4 problems as data quality constraints.
To fix the constraints, we looked up the clean value in the dataset that we cleaned up front.

\vspace{0.25em}

\noindent \textbf{Rule 1: } Matching dependency on corporation (Weighted Jaccard Similarity $>$ 0.8).

\vspace{0.25em}

\noindent \textbf{Rule 2: } Matching dependency on drug (Weighted Jaccard Similarity $>$ 0.8).

\vspace{0.25em}

\noindent \textbf{Rule 3: } Label must either be ``covered" or ``not covered".

\vspace{0.25em} 

\noindent \textbf{Rule 4: } Either drug or medical device should be null.

\vspace{0.5em}

\section{MNIST Setup}
We include visualization of the errors that we generated for the MNIST experiment.
We generated these errors in MATLAB by taking the grayscale version of the image (a $64\times 64$ matrix) and corrupting them by block removal and fuzzying.

\begin{figure}[ht]
\centering
\includegraphics[scale=0.20]{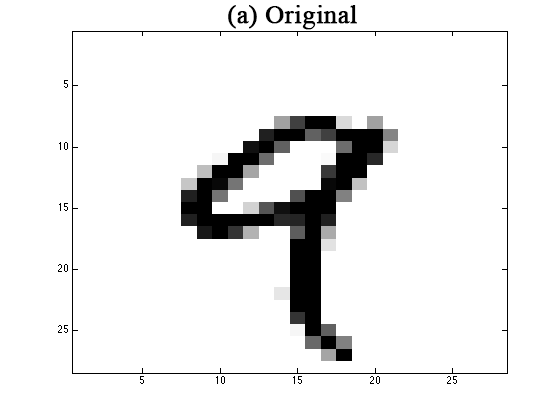}
 \includegraphics[scale=0.20]{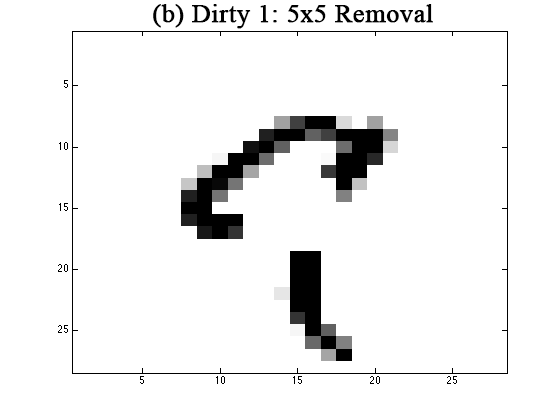}
 \includegraphics[scale=0.20]{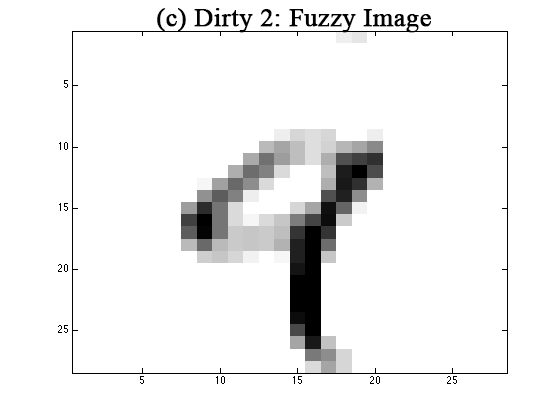}
 \caption{We experiment with two forms of corruption in the MNIST image datasets: 5x5 block removal and making the images fuzzy. Image (a) shows an uncorrupted ``9", image (b) shows one corrupted with block removal, and image (c) shows one that is corrupted with fuzziness. \label{mnist-corr}}
\end{figure}

\end{document}